\newcommand{\longversion}[1]{} % for text that appears only in the long version of the paper
\newcommand{\shortversion}[1]{#1} % for text that appears only in the short version of the paper
\newtheorem{theorem}{Theorem}[]
\newtheorem{lemma}{Lemma}[]
\tikzstyle{every node}=[minimum size=2mm]
\tikzstyle{vertex}=[circle,fill=black,inner sep=0mm,draw]
\newcounter{tempcounter}
\newcommand{\restateLemma}[2]{%
 \setcounter{tempcounter}{\value{theorem}}
 \setcounterref{theorem}{#2}
 \addtocounter{theorem}{-1}
 #1*
 \setcounter{theorem}{\value{tempcounter}}
}
\newtheorem{claim}{Claim}
\newcounter{tempcounterClaim}
\newcommand{\rteightthirds}{1.0963}
\newcommand{\rtcubic}{1.1389}
\newcommand{\rtquad}{1.2070}
\newcommand{\rtis}{1.2356}
\newcommand{\rtcol}{2.2356}
\newcommand{\bs}{\setminus}%{\backslash}
\newcommand{\esrtis}{1.2330}
\title{Faster Graph Coloring in Polynomial Space}
\author[1,2]{Serge Gaspers}
\author[1,2]{Edward J. Lee}
\affil[1]{UNSW Australia, Sydney, Australia.\\ \texttt{sergeg@cse.unsw.edu.au}, \texttt{e.lee@unsw.edu.au}}
\affil[2]{Data61, CSIRO, Sydney, Australia}
\begin{document}

\maketitle

\begin{abstract}
We present a polynomial-space algorithm that computes the number of independent sets of any input graph in time $O(\rtcubic^n)$ for graphs with maximum degree 3 and in time $O(\rtis^n)$ for general graphs, where $n$ is the number of vertices\longversion{ in the input graph}.
Together with the inclusion-exclusion approach of Bj{\"o}rklund, Husfeldt, and Koivisto [SIAM J. Comput. 2009], % in the abstract we have to write out references
this leads to a faster polynomial-space algorithm for the graph coloring problem with running time $O(\rtcol^n)$.
As a byproduct, we also obtain an exponential-space $O(\esrtis^n)$ time algorithm for counting independent sets.

Our main algorithm counts independent sets in graphs with maximum degree \longversion{at most }3 and no vertex with three neighbors of degree 3.
This polynomial-space algorithm is \longversion{designed and }analyzed using the recently introduced Separate, Measure and Conquer approach [Gaspers \& Sorkin, ICALP 2015].
Using Wahlstr{\"o}m's compound measure approach, this improvement in running time for small degree graphs is then bootstrapped to larger degrees,
giving the improvement for general graphs.
Combining both approaches leads to some inflexibility in choosing vertices to branch on for the small-degree cases, which we counter by structural graph properties.
The main complication is to upper bound the number of times the algorithm has to branch on vertices all of whose neighbors have degree 2, while still decreasing the size of the separator each time the algorithm branches.
\end{abstract}

\section{Introduction}

Graph coloring is a central problem in discrete mathematics and computer science.
In exponential time algorithmics \cite{FominK10}, graph coloring is among the most well studied problems, and it is an archetypical partitioning problem.
Given a graph $G$ and an integer $k$, the problem is to determine whether the vertex set of $G$ can be partitioned into $k$ independent sets.
Already in 1976, Lawler \cite{Lawler76} designed a dynamic programming algorithm for graph coloring and upper bounded its running time by $O(2.4423^n)$, where $n$ is the number of vertices of the input graph.
This was the best running time for graph coloring for 25 years, when Eppstein \cite{Eppstein01,Eppstein03} improved the running time to $O(2.4150^n)$ by using better bounds on the number of small maximal independent sets in a graph.
Based on bounds on the number of maximal induced bipartite subgraphs and refined bounds on the number of size-constrained maximal independent sets, Byskov \cite{Byskov04} improved the running time to $O(2.4023^n)$.
An algorithm based on fast matrix multiplication by Bj{\"o}rklund and Husfeldt \cite{BjorklundH08} improved the running time to $O(2.3236^n)$.
The current fastest algorithm for graph coloring, by Bj{\"o}rklund et al. \cite{BjorklundH06,BjorklundHK09,Koivisto06}, is based on the principle of inclusion--exclusion and Yates' algorithm for the fast zeta transform.
This breakthrough algorithm solves graph coloring in $O^*(2^n)$ time, where the $O^*$-notation is similar to the $O$-notation but ignores polynomial factors.

% We will probably not mention $k$-coloring algorthms for small $k$: \cite{Schiermeyer96, Byskov04, BeigelE05}, etc.

A significant drawback of the aforementioned algorithms is that they use exponential space. Often, the space bound is the same as the time bound, up to polynomial factors. This is undesirable \cite{Woeginger08}, certainly for modern computing devices.
Polynomial-space algorithms for graph coloring have been studied extensively as well with successive running times $O^*(n!)$ \cite{Christofides71}, $O((k/e)^n)$ (randomized) \cite{FederM02}, $O((2+ \log k)^n)$ \cite{AngelsmarkT05}, $O(5.283^n)$ \cite{BodlaenderK06}, $O(2.4423^n)$ \cite{BjorklundH08}, and $O(2.2461^n)$ \cite{BjorklundHK09}.
The latter algorithm is an inclusion--exclusion algorithm relying on a $O(1.2461^n)$ time algorithm \cite{FurerK07} for computing the number of independent sets in a graph\longversion{ as a subroutine}.
Their method transforms any polynomial-space $O(c^n)$ time algorithm for counting independent sets into a polynomial space $O((1+c)^n)$ time algorithm for graph coloring.
The running time bound for counting independent sets was subsequently improved by Fomin et al. \cite{fomin2009two} to $O(1.2431^n)$ and by Wahlstr{\"o}m \cite{Wahlstrom08} to $O(1.2377^n)$.
Wahlstr{\"o}m's algorithm is the current fastest published algorithm for counting independent sets of a graph, it uses polynomial space, and it works for the more general problem of computing the number of maximum-weight satisfying assignments of a 2-CNF formula.
For a reduction from counting independent sets to counting maximum-weight satisfying assignments of a 2-CNF formula where the number of variables equals the number of vertices, see \cite{DahllofJW05}.

We note that Junosza-Szaniawski and Tuczynski \cite{Junosza-SzaniawskiT15} present an algorithm for counting independent sets with running time $O(1.2369^n)$ in a technical report that also strives to disconnect low-degree graphs.
For graphs with maximum degree $3$ that have no degree-3 vertex with all neighbors of degree $3$, they present a new algorithm with running time $2^{n_3/5+o(n)}$, where $n_3$ is the number of degree-3 vertices, and the overall running time improvement comes from plugging this result into Wahlstr{\"{o}}m's \cite{Wahlstrom08} previously fastest algorithm for the problem.
However, we note that the $2^{n_3/5+o(n)}$ running time for counting independent sets can easily be obtained from previous results.
Namely, the problem of counting independent sets is a polynomial PCSP with domain size 2, as shown in \cite{scott2009polynomial}, and the algorithm of \cite{GaspersS15} for polynomial PCSPs preprocesses all degree-2 vertices, leaving a cubic graph on $n_3$ vertices that is solved in $2^{n/5+o(n)}$ time.
Improving on this bound is challenging, and degree-3 vertices with all neighbors of degree 2 need special attention since branching on them affects the degree-3 vertices of the graph exactly the same way as for the much more general polynomial PCSP problem, whereas for other degree-3 vertices one can take advantage of the asymmetric nature of the typical independent set branching (i.e., we can delete the neighbors when counting the independent sets containing the vertex we branch on).

\subparagraph{Our Results.}
% \todo{Edward: I split paragraph into 3 parts.\emph{Can this paragraph be broken up into 2-3 smaller ones?}}
We present a polynomial-space algorithm computing the number of independent sets of any input graph $G$ in time $O(\rtis^n)$, where $n$ is the number of vertices of $G$.
Our algorithm is a branching algorithm that works initially similarly as Wahlstr{\"o}m's algorithm, where we slightly improve the analysis using potentials (as, e.g., in \cite{GaspersS12,Iwata11,Wahlstrom04}) to amortize some of the worst branching cases with better ones.
This algorithm uses a branching strategy that basically ensures that both the maximum degree and the average degree of the graph do not increase.
This makes it possible to divide the analysis of the algorithm into sections depending on what local structures can still occur in the graph, use a separate measure for the analysis of each section, and combine these measures giving a compound (piecewise linear) measure for the analysis of the overall algorithm.

For instances where the maximum degree is 3 and no vertex has three neighbors with degree 3,
we substitute a subroutine that is designed and analyzed using the recently introduced \emph{Separate, Measure and Conquer} technique \cite{GaspersS15}.
\longversion{In this subroutine, the average degree of the graph is at most $\nicefrac{8}{3}$. }It computes a small balanced separator of the graph and prefers to branch on vertices in the separator, adjusting the separator as needed by the analysis, and reaping a huge benefit when the separator is exhausted and the resulting connected components can be handled independently.
The Separate, Measure and Conquer technique helps to amortize this sudden gain with the analysis of the previous branchings, for an overall improvement of the running time.

Since using a separator restricts our choice in the vertices to branch on, we use the structure of the graph and its separation to upper bound the number of unfavorable branching situations and adapt our measure accordingly.
Namely, the algorithm avoids branching on degree-3 vertices in the separator with all neighbors of degree 2 as long as possible, often rearranging the separator to avoid this case.
In our analysis we can then upper bound the number of unfavorable branchings and give the central vertex involved in such a branching a special weight and role in the analysis.
We call these vertices \emph{spider vertices}.
Our meticulous analysis of this subroutine upper bounds its running time by $O(\rteightthirds^n)$.
%(compare with $O(1.0969^n)$ by \cite{Junosza-SzaniawskiT15} and $O(1.1092^n)$ by \cite{Wahlstrom08}, and $O(1.0801^n)$ using the exponential-space approach of Fomin and H{\o}ie \cite{FominH06}).
For graphs with maximum degree at most $3$, we obtain a running time of $O(\rtcubic^n)$. % (compare with $O(1.1394^n)$ by \cite{Junosza-SzaniawskiT15}, $O(1.1499^n)$ by \cite{Wahlstrom08}, and $O(1.1225^n)$ using the exponential-space approach of Fomin and H{\o}ie \cite{FominH06}).
This improvement for small degree graphs is bootstrapped, using Wahlstr{\"o}m's compound measure analysis, to larger degrees, and gives a running time improvement to $O(\rtis^n)$ for counting independent sets of arbitrary graphs and to $O(\rtcol^n)$ for graph coloring.
% \todo{Edward: Add the correct corollary? I am also okay with leaving it out.\emph{For the exponential-space variant, consider not just citing the path-width paper but state the key corollary you need (can count the independent sets of a cubic graph in time xxx or something alike).}}
Bootstrapping an exponential-space pathwidth-based $O(1.1225^n)$ time algorithm \cite{FominH06} for cubic graphs instead, we obtain an exponential-space algorithm for counting independent sets with running time $O(\esrtis^n)$.
\shortversion{ 
Some proofs have been moved to the appendix due to space constraints.
}

\section{Methods}

\subparagraph{Measure and Conquer.}
The analysis of our algorithm is based on the Measure and Conquer method \cite{FominGK09}.
A \emph{measure} for a problem (or its instances) is a function from the set of all instances of the problem to the set of non-negative reals.
Modern branching analyses often use a potential function as measure that gives a more fine-grained
way of tracking the progress of a branching algorithm than a measure that is merely the number of vertices or edges of the graph.
The following lemma is at the heart of our analysis.
It generalizes a similar lemma from \cite{GaspersS12} to the treatment of subroutines.

% \todo{\emph{p3: Lemma 1 -> maybe you want to indicate where in [18] this lemma is found. Also, since it seems like an easy inductive argument proves this argument, you might want to state this less formal and just use the inductive argument instead of the lemma. At the very least you could communicate to the reader that the lemma is not that deep (since he will be confused about this once he sees the proof).}}
\begin{lemma}[\cite{Gaspers10}] \label{lem:combinemeasureanalysis}
    Let $A$ be an algorithm for a problem $P$,
    $B$ be an algorithm for a class $\mathcal{C}$ of instances of $P$,
    $c \ge 0$ and $r>1$ be constants,
    and $\mu(\cdot), \mu_B(\cdot), \eta(\cdot)$ be measures 
    for $P$,
    such that
    for any input instance $I$ from $\mathcal{C}$, $\mu_B(I) \le \mu(I)$, and
    for any input instance $I$,
    $A$ either solves $P$ on $I\in \mathcal{C}$ by invoking $B$ with running time $O(\eta(I)^{c+1} r^{\mu_B(I)})$,
    or reduces $I$ to $k$ instances $I_1,\ldots,I_k$,
    solves these recursively, and combines their solutions to solve~$I$,
    using time $O(\eta(I)^{c})$ for the reduction and combination steps
    (but not the recursive solves),
    \begin{align}
        (\forall i) \quad \eta(I_i) & \leq \eta(I)-1 \text{, and} \label{eq:eta}
        \\
        \sum_{i=1}^k r^{\mu(I_i)} & \leq r^{\mu(I)} . \label{eq:mu}
    \end{align}
    Then $A$ solves any instance $I$
    in time $O(\eta(I)^{c+1} r^{\mu(I)})$.
\end{lemma}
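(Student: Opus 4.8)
The plan is to prove the bound by induction over the recursion tree that $A$ produces on input $I$, where internal nodes are the reduction/branching steps and leaves are the invocations of $B$. First I would argue that this tree is finite and hence admits structural induction: by~\eqref{eq:eta} the value of $\eta$ drops by at least $1$ from a node to each child, and since $\eta$ maps into the non-negative reals, the depth of the tree is at most $\eta(I)$. This lets me avoid a delicate induction over the real-valued quantity $\eta$ and instead induct on the height of the subtree at the current instance.

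Next I would fix constants $\beta,\gamma>0$ with the reduction-and-combination cost at each node bounded by $\beta\,\eta(I)^{c}$ and each invocation of $B$ on an instance $I\in\mathcal{C}$ bounded by $\gamma\,\eta(I)^{c+1} r^{\mu_B(I)}$, as guaranteed by the hypotheses. The claim to prove by induction is that, with $C:=\max\{\beta,\gamma\}$, the running time $T(\cdot)$ of $A$ satisfies $T(I)\le C\,\eta(I)^{c+1} r^{\mu(I)}$ for every instance. For a leaf, $A$ invokes $B$ on some $I\in\mathcal{C}$, so
\[
T(I)\;\le\;\gamma\,\eta(I)^{c+1} r^{\mu_B(I)}\;\le\;\gamma\,\eta(I)^{c+1} r^{\mu(I)}\;\le\;C\,\eta(I)^{c+1} r^{\mu(I)},
\]
where the middle step uses $\mu_B(I)\le\mu(I)$ and $r>1$.

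For an internal node $A$ reduces $I$ to $I_1,\dots,I_k$; combining $\eta(I_i)\ge 0$ with~\eqref{eq:eta} forces $\eta(I)\ge 1$, which I will need below. Applying the induction hypothesis to each child, then~\eqref{eq:eta} (using that $x\mapsto x^{c+1}$ is non-decreasing on $[0,\infty)$ since $c+1\ge 1$) and~\eqref{eq:mu}, I obtain
\[
T(I)\;\le\;\beta\,\eta(I)^{c}+C\sum_{i=1}^{k}\eta(I_i)^{c+1} r^{\mu(I_i)}\;\le\;\beta\,\eta(I)^{c}+C\,(\eta(I)-1)^{c+1} r^{\mu(I)}.
\]
The remaining and only genuine calculation is to absorb the additive term $\beta\,\eta(I)^{c}$ into the target bound. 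Since $r^{\mu(I)}\ge 1$, it suffices to show $\beta\,\eta(I)^{c}+C\,(\eta(I)-1)^{c+1}\le C\,\eta(I)^{c+1}$, i.e. $\beta\,\eta(I)^{c}\le C\,[\eta(I)^{c+1}-(\eta(I)-1)^{c+1}]$.

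I expect this polynomial-factor bookkeeping to be the crux, and the clean way to dispatch it is the pointwise inequality, valid for all $\eta\ge 1$ and $c\ge 0$,
\[
\eta^{c+1}-(\eta-1)^{c+1}\;\ge\;\eta^{c+1}-(\eta-1)\,\eta^{c}\;=\;\eta^{c},
\]
which follows from $(\eta-1)^{c+1}=(\eta-1)(\eta-1)^{c}\le(\eta-1)\eta^{c}$. Hence $C\,[\eta(I)^{c+1}-(\eta(I)-1)^{c+1}]\ge C\,\eta(I)^{c}\ge\beta\,\eta(I)^{c}$ because $C\ge\beta$, which closes the induction and gives $T(I)=O(\eta(I)^{c+1} r^{\mu(I)})$. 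The entire difficulty lies in choosing a bound whose shape is preserved by the recurrence: the inequality $\eta^{c+1}-(\eta-1)^{c+1}\ge\eta^{c}$ is exactly what makes the per-step reduction cost $\eta^{c}$ telescope against the gap opened up by decreasing $\eta$ by one, so that the polynomial prefactor stays at degree $c+1$ rather than accumulating with the recursion depth.
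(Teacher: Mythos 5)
The paper never proves this lemma: it is stated as a citation to \cite{Gaspers10}, and no argument for it appears anywhere in the source, so there is no in-paper proof to compare against. Judged on its own, your proof is correct. The induction is well-founded (by \eqref{eq:eta} and non-negativity of measures the recursion tree has depth at most $\eta(I)$), the leaf case uses $\mu_B(I)\le\mu(I)$ and $r>1$ exactly as intended, the internal case combines \eqref{eq:eta}, monotonicity of $x\mapsto x^{c+1}$, and \eqref{eq:mu} correctly, and the telescoping inequality $\eta^{c+1}-(\eta-1)^{c+1}\ge\eta^{c}$ (valid for $\eta\ge 1$, $c\ge 0$) is precisely what absorbs the per-node cost $O(\eta^{c})$ without inflating the polynomial prefactor with the recursion depth. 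Your route folds everything into a single induction on the recursion tree; the more common way to prove statements of this shape (and the style of the Measure and Conquer literature the paper builds on) separates the two bookkeeping facts: \eqref{eq:mu} bounds the number of leaves of the search tree by $r^{\mu(I)}$, \eqref{eq:eta} bounds its depth by $\eta(I)$, and the $O(\eta^{c+1}r^{\mu})$ bound follows by multiplying node count by per-node cost. The two arguments are equivalent in substance; yours makes explicit why the prefactor stabilizes at degree $c+1$. One caveat worth recording: your internal-node case derives $\eta(I)\ge 1$ from \eqref{eq:eta}, which presupposes $k\ge 1$. A reduction with $k=0$ (a base case solved directly in time $O(\eta(I)^{c})$) is a leaf that is not a $B$-invocation, and there your claimed bound $C\,\eta(I)^{c+1}r^{\mu(I)}$ needs $\eta(I)\ge 1$, or a $\max\{1,\eta(I)\}$ convention; this is an artifact of the lemma's $O(\cdot)$ phrasing (whose bound is already degenerate when $\eta(I)<1$) rather than a flaw in your approach, and is patched by treating such instances as constant-time.
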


\noindent
When Algorithm $A$ does not invoke Algorithm $B$, we have the usual Measure and Conquer analysis.
Here, $\mu$ is used to upper bound the number of leaves of the search tree and deserves the most attention,
while $\eta$ is usually a polynomial measure \longversion{that is used }to upper bound the depth of the search tree.
For handling subroutines, it is crucial that the measure does not increase when Algorithm $A$ hands over
the instance to Algorithm $B$ and we constrain that $\mu_B(I)\le \mu(I)$.

\subparagraph{Compound analysis.}
We can view Wahlstr{\"o}m's compound analysis \cite{Wahlstrom08} as a repeated application of Lemma~\ref{lem:combinemeasureanalysis}. For example, there is one subroutine $A_3$ for when the maximum degree of the graph is 3. The algorithm prefers then to branch on a degree-3 vertex with all neighbors of degree 3. After all such vertices have been exhausted, the algorithm calls a new subroutine $A_{8/3}$ that takes as input a graph with maximum degree 3 where no degree-3 vertex has only degree 3 neighbors. In this case the average degree of the graph is at most $\nicefrac{8}{3}$, and the algorithm prefers to branch on vertices of degree 3 that have 2 neighbors of degree 3, etc. The analysis constrains that the measure for the analysis of $A_{8/3}$ is at most the measure for $A_{3}$ for the instance that is handed by $A_3$ to $A_{8/3}$.
In an optimal analysis, we expect the measure for such an instance to be equal in the analysis of $A_3$ and $A_{8/3}$, and Wahlstr{\"o}m actually imposes equality at the \emph{pivot point} $\nicefrac{8}{3}$.

\subparagraph{Separate, Measure and Conquer.}
In our case, the $A_{8/3}$ algorithm is based on \emph{Separate, Measure and Conquer}. 
For small-degree graphs, we can compute small balanced separators in polynomial time.
The algorithm then prefers to branch on vertices in the separator.
The Separate, Measure and Conquer technique allows to distribute the large gain obtained by disconnecting the instance onto the previous branching vectors.
While, often, the measure is made up of weights that are assigned to each vertex, this method assigns these weights only to the larger part of the graph that is separated from the rest by the separator, and somewhat larger weights to the vertices in the separator.
See \eqref{eq:deg3measure} on page \pageref{eq:deg3measure}.
Thus, after exhausting the separator, the measure accurately reflects the ``amount of work'' left to do.
We artificially increase the measure of very balanced instances by small penalty weights -- this is so because branching on vertices can change the measure of the parts that are separated by the separator and the branching strategy might not always be able to make most of its progress on the large side.
Since we may exhaust the separators a logarithmic number of times, and computing a new separator might introduce a penalty term each time, the measure also includes a logarithmic term that counteracts these artificial increases in measure, and will in the end only contribute a polynomial factor to the running time.
For an in-depth treatment of the method we refer to \cite{GaspersS15}.
Since we use the \emph{Separate, Measure and Conquer} method when the average degree drops to at most $\nicefrac{8}{3}$, we slightly generalize the separation computation from \cite{GaspersS15}, where the bound on the size of the separator depended only on the maximum degree.
A separation $(L,S,R)$ of a graph $G$ is a partition of the vertex set of $G$ such that every path from a vertex in $L$ to a vertex in $R$ contains a vertex from $S$.

\begin{restatable}{lemma}{lemSep}
\label{lem:sep}
  Let $B\in \mathbb{R}$.
  Let $\mu$ be a measure for graph problems such that
  for every graph $G=(V,E)$, every $R\subseteq V$, and every $v\in V$, we have that
  $|\mu(R\cup \{v\}) - \mu(R)| \le B$.
  Assume that $\mu(R)$, the restriction of $\mu$ to $R$, can be computed in polynomial time.
  If there is an algorithm computing a path decomposition of width at most $k$ of a graph $G$ in polynomial time,
  then there is a polynomial time algorithm computing a separation $(L,S,R)$ of $G$ with $|S|\le k$ and $|\mu(L)-\mu(R)| \le B$.
\end{restatable}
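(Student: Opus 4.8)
The plan is to recast the claim as a one-dimensional discrete intermediate value argument carried out along a path decomposition, generalizing the usual ``balanced separator from small pathwidth'' construction from counting vertices to the measure $\mu$. First I would convert the given width-$k$ path decomposition into a \emph{nice} one in polynomial time (prepend and append empty bags, and between any two consecutive bags insert a block of forget operations followed by a block of introduce operations); this preserves width $k$ and guarantees that consecutive bags differ by exactly one vertex. Writing $X_0,\dots,X_q$ for the resulting bag sequence, with $X_0=X_q=\emptyset$ and $q=O(n)$, each bag yields a candidate separation $(L_t,X_t,R_t)$, where $L_t$ is the set of already-forgotten vertices and $R_t$ the set of not-yet-introduced vertices. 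The interval property of path decompositions shows this is a genuine separation: a vertex of $L_t$ occurs only in bags preceding $X_t$ and a vertex of $R_t$ only in bags following it, so no edge can join $L_t$ and $R_t$.

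Next I would track the signed imbalance $f(t):=\mu(L_t)-\mu(R_t)$ as $t$ runs from $0$ to $q$. Because consecutive bags differ by a single vertex, each step is either an introduce, in which one vertex leaves $R_t$ and enters the bag, or a forget, in which one vertex leaves the bag and enters $L_{t+1}$; in either case exactly one of $L_t,R_t$ gains or loses a single vertex while the other is unchanged. The hypothesis $|\mu(R\cup\{v\})-\mu(R)|\le B$ then yields $|f(t+1)-f(t)|\le B$ for every $t$. At the endpoints $f(0)=\mu(\emptyset)-\mu(V)$ and $f(q)=\mu(V)-\mu(\emptyset)=-f(0)$, so after possibly reversing the decomposition (which negates $f$) I may assume $f(0)\le 0\le f(q)$, with the degenerate case $f(0)=0$ handled by the empty separator directly.

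I would then invoke the discrete intermediate value step: let $t^{*}$ be the least index with $f(t^{*})\ge 0$. Since one step moves $f$ by at most $B$, this forces $0\le f(t^{*})<B$ and $-B\le f(t^{*}-1)<0$, so \emph{both} $t^{*}$ and $t^{*}-1$ satisfy $|\mu(L_t)-\mu(R_t)|\le B$. This is where the main obstacle lies. The balance argument by itself only controls $\mu$, and a bag of a nice path decomposition can have size $k+1$, one more than permitted, so a single balanced bag need not give a separator of size $\le k$. The reason for producing \emph{two} adjacent balanced indices is precisely that consecutive bags differ by exactly one vertex and hence have sizes differing by one; they cannot both equal $k+1$. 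Consequently at least one of $X_{t^{*}-1},X_{t^{*}}$ has size at most $k$, and the associated separation is the desired $(L,S,R)$.

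Finally I would verify the running time. The nice decomposition has $O(n)$ bags, and the only nontrivial work is evaluating $\mu(L_t)$ and $\mu(R_t)$ at each of them, which is polynomial by the assumption that the restriction of $\mu$ is polynomial-time computable; scanning for $t^{*}$ is then linear. I expect the delicate point, rather than any calculation, to be the simultaneous control of both requirements: it is the single-vertex-per-step structure of a nice decomposition that at once bounds the jump of $f$ by $B$ and bounds the size gap between adjacent bags by one, and it is the coincidence of these two facts that lets the construction meet the measure-balance and the $|S|\le k$ constraints together.
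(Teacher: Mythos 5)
Your proof is correct and follows essentially the same route as the paper's: convert the given decomposition into a nice path decomposition, consider the separations $(L_t,X_t,R_t)$ induced by its bags, and apply a discrete intermediate-value argument to the imbalance $\mu(L_t)-\mu(R_t)$, whose per-step change is bounded by $B$ because consecutive bags differ by a single vertex. In fact your write-up is slightly more careful than the paper's on one point: the paper never addresses that a width-$k$ decomposition has bags of size up to $k+1$, whereas your observation that two adjacent balanced bags cannot both have size $k+1$ closes exactly that gap.
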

\newcommand{\proofLemSep}{%
  \begin{proof}
   The proof is basically the same as for the separation computation from \cite{GaspersS15}, but we repeat it here for completeness.
   First, compute a path decomposition of width $k$ in polynomial time.
   We view a path decomposition as a sequence of bags $(B_1, \dots, B_b)$ which are subsets of vertices such that for each edge of $G$, there is a bag containing both endpoints, and for each vertex of $G$, the bags containing this vertex form a non-empty consecutive subsequence. The width of a path decomposition is the maximum bag size minus one.
   We may assume that every two consecutive bags $B_i$, $B_{i+1}$ differ by exactly one vertex, otherwise we insert between $B_i$ and $B_{i+1}$ a sequence of bags where the vertices from $B_i \setminus B_{i+1}$ are removed one by one followed by a sequence of bags where the vertices of $B_{i+1} \setminus B_i$ are added one by one; this is the standard way to transform a path decomposition into a \emph{nice} path decomposition of the same width where the number of bags is polynomial in the number of vertices \cite{BodlaenderK96}.
   Note that each bag is a separator and a bag $B_i$ defines the separation $(L_i, B_i, R_i)$ with $L_i = (\bigcup_{j=1}^{i-1} B_j)\setminus B_i$ and $R_i = V \setminus (L_i\cup B_i)$.
   Since the first of these separations has $L_1=\emptyset$ and the last one has $R_b=\emptyset$, at least one of these separations has $|\mu_r(L_i)-\mu_r(R_i)| \le B$.
   Finding such a bag can clearly be done in polynomial time.
  \end{proof}
}
\longversion{\proofLemSep}

\noindent
We will use the lemma for graphs with maximum degree 3 and graphs with maximum degree 3 and average degree at most $\nicefrac{8}{3}$, for which path decompositions of width at most $\nicefrac{n}{6}+o(n)$ and $\nicefrac{n}{9}+o(n)$ can be computed in polynomial time, respectively \cite{fomin2009two,FominH06}.

One disadvantage of using the Separate, Measure and Conquer method for $A_{8/3}$ is that the algorithm needs to choose vertices for branching so that the size of the separator decreases in each branch.
However, Wahlstr\"{o}m's algorithm defers to branch on degree-3 vertices with all neighbors of degree 2 until this is no longer possible, since this case leads to the largest branching factor for degree 3.
For our approach, we instead rearrange the separator in some cases until we are only left with spider vertices, a structure where our algorithm cannot avoid branching on a degree-3 vertex with all neighbors of degree 2, we give a special weight to these spider vertices and upper bound their number.

\subparagraph{Potentials.}
To optimize the running time further, we also use potentials; see \cite{GaspersS12,Iwata11,Wahlstrom04}. These are constant weights that are added to the measure if certain global properties of the instance hold. For instance, we may use them to slightly increase the measure when an unfavorable branching strategy needs to be used. The constraint \eqref{eq:mu} for this unfavorable case then becomes less constraining, while all branchings that can lead to this unfavorable case get tighter constraints. This allows then to amortize unfavorable cases with favorable ones.

\section{Algorithm}

We first introduce notation necessary to present the algorithm.
%for explaining the algorithm provided.
% > PRESENT is better than EXPLAIN
Let $V(G)$ and $E(G)$ denote the vertex set and the edge set of the input graph $G$.
%be the induced vertex set of the graph $G$ and $E(G)$ be the induced edge set of $G$.
% > No reason to say INDUCED vertex/edge set
For a vertex $v \in V(G)$, its neighborhood, $N_G(v)$, is the set of vertices adjacent to $v$.
The \emph{closed neighborhood} of a vertex $v$ is $N_G[v] = N_G(v) \cup \{v\}$.
If $G$ is clear from context, we just use $N(v)$ and $N[v]$.

The degree of $v$ is denoted $d(v) = |N_G(v)|$.
An edge $uv \in E(G)$ is adjacent to vertex $u \in V(G)$ and $v \in V(G)$.
For two vertices $u$ and $v$ connected by a path, let $P \subset V(G)$ with $u,v \not\in P$ be the intermediate vertices between $u$ and $v$ on the path. If $P$ consists only of degree-2 vertices then we call $P$ a \emph{2-path} of $u$ and $v$.

The maximum degree of $G$ is denoted $\Delta(G)$ and $d(G) = 2 |E(G)| / |V(G)|$ is its \emph{average degree}.
A \emph{cubic} graph consists only of degree-3 vertices. A \emph{subcubic} graph has maximum degree at most 3.
A \emph{$(k_1, k_2,...,k_d)$ vertex} is a degree-$d$ vertex with all neighbors of degree $k_1, k_2, ... ,k_d$.
A separation $(L,S,R)$ of $G$ is a partition of its vertex set into the three sets $L,S,R$ such that no vertex in $L$ is adjacent to any vertex in $R$. The sets $L,S,R$ are also known as the \emph{left set}, \emph{separator}, and \emph{right set}.
% > use \emph{...} instead of \emph{...}
%and when we \emph{drag} a separator vertex to the left or right, we mean we remove a vertex in $S$ and place it inside $L$ or $R$ respectively.
% > Let's only use dragging as an informal notion. Note that it also sometimes implies moving some other vertex into 
Using a similar notion to \cite{GaspersS15}, a separation $(L,S,R)$ of $G$ is \emph{balanced} with respect to some measure $\mu$, and a branching constant $B$ if 
$|\mu(R) - \mu(L)| \leq 2B$ and \emph{imbalanced} if $|\mu(R) - \mu(L)| > 2B$.

% \todo{be consistent with inequality: Need to redo 3 different branching methods: very balanced $\leq B$, somewhat balanced $B \leq \mu \leq 2B$ and imbalanced $\geq 2B$. Then add this into the algorithm and analysis later.}
By convention, $\mu(R) \geq \mu(L)$ otherwise, we swap $L$ and $R$. We use the measure $\mu_r$ defined on page \pageref{eq:deg3measure} to compute the separation in our algorithm.
% \todo{Edward: We already motivate it, just after we first introduce it \emph{Motivate the cardinality function a bit before introducing it on page 5.}}
We will now describe the algorithm \texttt{\#IS} which takes as input a graph $G$, a separation $(L,S,R)$, and a cardinality function $\mathbf{c} : \{0,1\} \times V(G) \to \mathbb{N}$, and computes the number of independent sets of $G$ weighted by the cardinality function $\mathbf{c}$. For clarity, let $\mathbf{c}_{out}(v) = \mathbf{c}(0,v)$ and $\mathbf{c}_{in}(v) = \mathbf{c}(1,v)$.
% \todo{\emph{p5: You never introduce 1()}}
% \todo{\emph{p5: You're mixing $c_{in}(..)$ and $c(1,...)$. Be consistent.}}
More precisely, it computes
\begin{align*}
ind(G,\mathbf{c}) = \sum_{X\subseteq V(G)} \mathds{1} (X \text{ is an independent set in } G) \cdot \prod_{v\in X} \mathbf{c}_{in}(v) \cdot \prod_{v\in V\setminus X} \mathbf{c}_{out}(v)
\end{align*}
where $\mathds{1}(\cdot)$ is an indicator function which returns 1 if its arguments is true and 0 otherwise.
Note that for a cardinality function $\mathbf{c}$ initialized to $\mathbf{c}(0,v)=\mathbf{c}(1,v)=1$ for every vertex $v\in V(G)$, we have that $ind(G, \mathbf{c})$ is the number of independent sets of $G$.
Cardinality functions are used for bookkeeping during the branching process and have been used in this line of work before.
The separation $(L,S,R)$ is initialized to $(\emptyset, \emptyset, V(G))$ and will only come into play when $G$ is subcubic and has no (3,3,3)-vertex.
In this case, the algorithm calls a subroutine \texttt{\#3IS}, which constitutes the main contribution of this paper.
\texttt{\#3IS} computes a balanced separation of $G$, preferring to branch on vertices in the separator, readjusting the separator as needed, and is analyzed using the Separate, Measure and Conquer method.

\subparagraph{Skeleton Graph.}
The skeleton graph $\Gamma(G)$, or just $\Gamma$, of a subcubic graph $G$ is a graph where the degree-3 vertices of $G$ are in bijection with the vertices in $\Gamma$. Two vertices in $\Gamma$ are adjacent if the corresponding vertices are adjacent in $G$, or there exists a 2-path between the corresponding vertices in $G$. If $G$ has a separation $(L,S,R)$ then denote $(L_{\Gamma}, S_{\Gamma}, R_{\Gamma})$ to be the same separation of $G$ in $\Gamma$ consisting of only degree-3 vertices. \emph{Dragging} refers to moving vertices or a set of vertices of $G$ from one component of $(L,S,R)$ to another, creating a new separation $(L', S', R')$ such that $S'$ is still a separator of $G$.

\subparagraph{Spider Vertices.}
As Wahlstr\"{o}m's \cite{Wahlstrom08} analysis showed, an unfavorable branching case occurs on vertices of degree 3 which have neighbors of degree (2,2,2). Due to our algorithm's handling of these vertices we narrowed down the undesirable vertices called \emph{spider vertices} down to a specific list of properties. If $s$ is a spider vertex then:
\begin{itemize}
  \item $s \in S$
  \item $s$ has neighbors of degree (2,2,2)
  \item Either:
  \begin{itemize}
     \item $|N_{\Gamma}(s) \cap L| = 2$ and $N_{\Gamma}(s) \cap R = \{r\}$ with $r$ having neighbors of degree (2,2,2). In this case we call $s$ a \emph{left spider vertex}
     \item $|N_{\Gamma}(s) \cap R| = 2$ and $N_{\Gamma}(s) \cap L = \{l\}$ with $l$ having neighbors of degree (2,2,2). In this case we call $s$ a \emph{right spider vertex}
     \item $|N_{\Gamma}(s) \cap L| = 1$, $|N_{\Gamma}(s) \cap R| = 1$, $N_{\Gamma}(s) \cap S = \{s'\}$ and $s'$ has neighbors of degree (2,2,2). In this case we call both $s$ and $s'$ a \emph{center spider vertex}, which occur in pairs.
   \end{itemize} 
\end{itemize}
A left spider vertex $s \in S$ can be dragged to the left along with the 2-path from $s$ to $r$. If this were ever to occur, then $r$ would be a right spider vertex, and vice versa.
% Left and right spider vertices have weight $s_3'$. In a center spider vertex pair $s$ and $s'$, one of them would have weight $s_3'$ while the other takes on an ordinary weight of $s_3$.
% These structurally applied weights allows amortization of the spider vertex cases against non-spider vertices. 

\begin{figure}[ht]
    \centering
    \begin{tikzpicture}[yscale=0.5]
    \node[vertex] (1) at (0,0) {};
    \node[vertex,label=above:$r$] (2) at (1,0) {};
    \node[vertex] (3) at (2,-1) {};
    \node[vertex] (4) at (2,1) {};
    \node[vertex,label=above:$s$] (7) at (-1,0) {};
    \node[vertex] (8) at (-2,1) {};
    \node[vertex] (9) at (-2,-1) {};    
    \draw (1)--(2)--(3)--(3,-1) (2)--(4)--(3,1) (1)--(7) 
    (-3,-1)--(9)--(7)--(8)--(-3,1);
    \end{tikzpicture}
    \caption{A left spider vertex $s$.}
\end{figure}

\subparagraph{Multiplier Reduction.}
We use a reduction called multiplier reduction to simplify graphs that have a cut vertex efficiently. Suppose $G$ has a separation $(V_1, \{x\}, V_2)$ and $G_1 = G[V_1 \cup \{x\}]$ has measure at most a constant $B$. The \emph{multiplier reduction} can be applied to compute \texttt{\#IS}($G, (L,S,R), \mathbf{c}$) as follows.
\begin{enumerate}
    \item Let:
    \begin{itemize}
        \item $G_{\text{out}} = G_1 \bs \{x\}$
        \item $G_{\text{in}} = G_1 \bs N_{G_1}[x]$
        \item $c_{\text{out}} = $\texttt{\#IS}($G_{\text{out}}$,($L[G_{\text{out}}],S[G_{\text{out}}],R[G_{\text{out}}]$),\textbf{c})
        \item $c_{\text{in}} = $ \texttt{\#IS}($G_{\text{in}}$, ($L[G_{\text{in}}],S[G_{\text{in}}],R[G_{\text{in}}]$), \textbf{c})
    \end{itemize}
    \item Modify \textbf{c} such that $\mathbf{c}_{\text{in}}(x) = \mathbf{c}_{\text{in}}(x) \cdot c_{\text{in}}$ and $\mathbf{c}_{\text{out}}(x) = \mathbf{c}_{\text{out}}(x) \cdot c_{\text{out}}$
    \item Return \texttt{\#IS}($G[V_2 \cup \{x\}]$, $(L,S,R)$, \textbf{c}) 
\end{enumerate}
Since $G_1$ has a measure of constant size, both steps 1 and 2 take polynomial time.

% \todo{Edward: Is there a better name for this subparagraph? Something that describes the constant sized part that contains the vertex to be branched on $x$?}

% \textcolor{red}{Construct a new definition `lazy 2-separator'}
\subparagraph{Lazy 2-separator.} Suppose there is a vertex $x$ initially chosen to branch on as well as two vertices $\{y,z\} \subset V(G)$ with $d(y) \geq 3$ and $d(z) \geq 3$ such that $\{y,z\}$ is a separator which separates $x$ from $G$ in a constant measure subgraph. We call such vertices \emph{lazy 2-separators}, for a vertex $x$.
Similar to Walhstr\"{o}m's elimination of separators of size 2 in \cite{Wahlstrom04}, in line \ref{algoln:is-sep2}
of \texttt{\#IS} instead of branching on $x$, if there exists a lazy 2-separator $\{y,z\}$ for $x$ we branch on $y$. 
A multiplier reduction will be performed on $z$ in the recursive calls. 
Prioritizing \emph{lazy 2-separators} allows to exclude some unfavorable cases when branching on $x$.

% \todo{``The goal of this subroutine is to handle separators of size 2 in constant time.''\\
% NO! The running time is not constant. A branching is performed here, and this will give us a constraint for the running time analysis! We need to include this in the analysis, and new constraints or argue that these constraints are no more constraining than other existing constraints.}

\subparagraph{Associated Average Degree.}
Similar to \cite{Wahlstrom08}, we define the \emph{associated average degree} of a vertex $x \in V(G)$ as $\alpha(x) / \beta(x)$, in $G$ with average degree $d(G) = k$ where
\begin{equation}
    \alpha(x) = d(x) + | \{ y \in N(x) : d(y) < k\} |, \text{ and }
    \beta(x)  = 1 + \sum_{\hspace{-5mm} \{ y \in N(x) | d(y) < k\} \hspace{-8mm}} 1/d(y).
\end{equation}
By selecting vertices with high associated average degree, our algorithm prioritizes branching on vertices with larger decreases in measure.

\subparagraph{Branching.}
% \todo{\emph{motivate branching}}
We now outline the branching routine used to recursively solve smaller instances of the problem. Suppose we have a graph $G$, a separation $(L,S,R)$, and a cardinality function \textbf{c}. For a vertex $x$ we denote the following steps as \emph{branching on $x$}. 
\begin{enumerate}
    \item Let:
    \begin{itemize}
        \item $G_{\text{out}} = G \bs \{x\}$ 
        \item $G_{\text{in}} = G \bs (N(x) \cup \{x\})$
        \item $c_{\text{out}} = $ \texttt{\#IS}($G_{\text{out}}$, ($L[G_{\text{out}}],S[G_{\text{out}}],R[G_{\text{out}}]$), \textbf{c})
        \item $c_{\text{in}} = $ \texttt{\#IS}($G_{\text{in}}$, ($L[G_{\text{in}}],S[G_{\text{in}}],R[G_{\text{in}}]$), \textbf{c})
        \item $c'_{\text{out}} = \mathbf{c}_{\text{out}}(x)$
        \item $c'_{\text{in}} = \mathbf{c}_{\text{in}}(x) \cdot \prod_{v \in N(x)} \mathbf{c}_{\text{out}}(v)$
    \end{itemize}
    \item Return $c'_{\text{out}}\cdot c_{\text{out}} + c'_{\text{in}}\cdot c_{\text{in}}$
\end{enumerate}

% \todo{Refer to the definition of $\mu_r$, which is used in the description of Algorithm \#3IS.}

\begin{algorithm}[t]
    %\setstretch{1.1}
    \SetKwInOut{Algorithm}{Algorithm}
    \SetKwInOut{Input}{Input}
    \SetKwInOut{Output}{Output}
    \Algorithm{\texttt{\#IS}($G,(L,S,R), \mathbf{c}$) - \textsc{\#Independent Set} algorithm}
    \Input{Graph $G = (V,E)$, separation $(L,S,R)$ of $G$, cardinality function $\mathbf{c}$}
    \Output{$ind(G,\mathbf{c})$}
   % \todo[inline]{working: Describe the output more precisely. It is not the number of independent sets of $G$, but depends on $\mathbf{c}$ as well.}
    
    \If{$V = \emptyset$}{
        \Return 1 
        \label{algoln:is-empty}
    }
    \If{$|V| = 1$}{
        \Return $\mathbf{c}_{\text{in}}(x) + \mathbf{c}_{\text{out}}(x)$ where $V=\{x\}$
        \label{algoln:is-single}
    }
    \If{$\Delta(G) \leq 2$}{
        \Return a solution in polynomial time
        \label{algoln:is-deg2}
    }
    \ElseIf{$G$ is not connected and has $j$ connected components $G_1, G_2,...,G_j$}{
        % \Return $\prod_{i=0}^j c_i$ where $c_i = $ \texttt{\#IS}$(G_i,(\emptyset,\emptyset,V(G_i)), \mathbf{c})$.
        \Return $\prod_{i=1}^j$ \texttt{\#IS}$(G_i,(\emptyset,\emptyset,V(G_i)), \mathbf{c})$
        \label{algoln:is-conn-comp}
    }
    \ElseIf{$\Delta(G) = 4$, and all degree-4 vertices of $G$ only have degree-2 neighbors and there exists a vertex $x$ where $d(x) = 4$ and $x$ has a 2-path to a degree-3 vertex}{
    	Branch on $x$
    }
    \Else{Let vertex $x \in V$ be a vertex of maximum degree, secondarily maximizing the associated average degree $\alpha(x) / \beta(x)$ \newline
        \If{the multiplier reduction applies}{
            Apply the multiplier reduction.
            \label{algoln:is-mult-red}
        }
        \ElseIf{there exists a separator of size 2: $\{y, z\}$, with $d(y) \geq 3$ and $d(z) \geq 3$ whose removal leaves $G$ disconnected and either removes or leaves $N_G[x]$ in a component with constant measure at most $B$}{
            Branch on $y$. 
            \label{algoln:is-sep2}
        }
        \Else{
            \If{$\Delta(G) = 3$ and $G$ has no (3,3,3) vertex}{
              \Return \texttt{\#3IS}($G,(L,S,R), \mathbf{c}$)
              \label{algoln:is-3is}
            }

            \Else{
              Branch on $x$
              \label{algoln:is-branch}
            }
        }
    }
    \end{algorithm} 

\begin{algorithm}[tb]
    % \setstretch{0.95}
    \SetKwInOut{Algorithm}{Algorithm}
    \SetKwInOut{Input}{Input}
    \SetKwInOut{Output}{Output}
    \Algorithm{\texttt{\#3IS}($G$,$(L,S,R)$, $\mathbf{c}$) - \textsc{\#Independent Set} algorithm for subcubic graphs with no (3,3,3) vertex}
    \Input{Graph $G = (V,E)$, separation $(L,S,R)$ of $G$, cardinality function $\mathbf{c}$}
    \Output{$ind(G,\mathbf{c})$}
    \If{$S = \emptyset$}{
        Compute a balanced separation $(L,S,R)$ with respect to the measure $\mu$ using Lemma \ref{lem:sep}.
        \label{3is:new-sep}
    }
    \If{$\mu_r(L) > \mu_r(R)$}{
        Swap $L$ and $R$
        \label{3is:swapLR}
    }
    $(L,S,R)$ := \texttt{simplify}$(G, (L, S, R))$\\
    \label{3is:simplifycall}
    Let $s \in S$ be a maximum degree vertex with maximum associated average degree\\
    \If{the multiplier reduction applies}{
        Apply the multiplier reduction.
        \label{3is:multiplierreduction}
    }
    \ElseIf{there exists a separator of size 2: $\{y, z\}$, with $d(y) \geq 3$ and $d(z) \geq 3$ whose removal leaves $G$ disconnected and either removes or leaves $N_{G}[s]$ in a component with constant measure at most $B$}{
        Branch on $y$.\\
        \label{3is:sep2}
    }
    \ElseIf{$\mu_r(R) - \mu_r(L) \leq 2B$ and $s$ has neighbors of degree (2,2,2)}{
        \Return \texttt{spider}$(s, G, (L,S,R), \mathbf{c})$
        \label{3is:spidercall}
    }
    \ElseIf{$\mu_r(R) - \mu_r(L) > 2B$ and $s$ has two neighbors in $L$ and one neighbor $r$ in $R$, let $r'$ be the first degree-3 vertex or vertex from $S$ encountered when moving from $s$ to the right along a 2-path in $G$}{
        \Return \texttt{\#IS}($G$, $(L \cup P \cup \{s\}, (S\bs\{s\}) \cup \{r'\}, R \bs (P \cup \{r'\}))$, \textbf{c})
        \label{3is:imbaldrag-tobranch}
    }
    \ElseIf{$\mu_r(R) - \mu_r(L) > 2B$ and there exists $r \in N_{\Gamma}(s) \cap R$ with $N_{\Gamma}(r) \cap R = \emptyset$}{
        Let $\{r, r'\} = N_{\Gamma}(s) \cap R$ with $N_{\Gamma}(r) \cap R = \emptyset$. \\
        Branch on $r'$.
        \label{3is:bad_imbal_case}
    }
    \Else{
        Branch on $s$.\\
        \label{3is:branch}
    }
\end{algorithm}

\begin{algorithm}[!ht]
%\setstretch{1.1}
\SetKwInOut{Algorithm}{Algorithm}
\SetKwInOut{Input}{Input}
\SetKwInOut{Output}{Output}
\Algorithm{\texttt{simplify}($G,(L,S,R)$) - Applies simplification rules.}
\Input{Graph $G = (V,E)$, separation $(L,S,R)$ of $G$}
\Output{$(L,S,R)$}
\If{there exists a vertex $s \in S$ with no neighbor in $L$}{
    \Return \texttt{simplify}($L,S \backslash \{s\}, R \cup \{s\}$)
    \label{simplify:no-neg-L}
}
\ElseIf{there exists a vertex $s \in S$ with no neighbor in $R$}{
    \Return \texttt{simplify}($L \cup \{s\},S \backslash \{s\}, R$)
    \label{simplify:no-neg-R}
}
\ElseIf{there exists a vertex $s \in S$ with $d(s) = 2$}{
    \If{$(L,S,R)$ is balanced}{
        Let $l \in (N_{\Gamma} \cap L) \cup S$. Let $P$ be the 2-path for $s$ and $l$. \\
        % Let $l$ be the first degree-3 vertex or vertex from $S$ encountered when moving from $s$ to the left the 2-path along a path of degree-2 vertices in $G$ \\
        \Return \texttt{simplify}($G$, $(L \bs (P \cup \{l\}), (S\bs\{s\}) \cup \{l\}, R \cup P \cup \{s\})$) 
        \label{simplify:bal-drag}
    }
    \Else{
        Let $r \in (N_{\Gamma} \cap R) \cup S$. Let $P$ be the 2-path for $s$ and $r$. \\
        % Let $r$ be the first degree-3 vertex or vertex from $S$ encountered when moving from $s$ to the right along a path $P$ of degree-2 vertices in $G$\\
        \Return \texttt{simplify}($G$, $(L \cup P \cup \{s\}, (S\bs\{s\}) \cup \{r\}, R \bs (P \cup \{r\}))$)    
        \label{simplify:imbal-drag}
    }
}
\ElseIf{there exists a vertex $s \in S$ which does not have a vertex $l \in N_{\Gamma}(s) \cap L$ such that $N_{\Gamma}(l) \cap L \neq \emptyset$}{
    For $l \in (N_{\Gamma}(s) \cap L)$, let $A_l$ = $\left(N_{\Gamma}(l) \cap S\right)$, let $P_{(s,l)}$ be the 2-path from $s$ to $l$, and for $a \in A_l$ let $P_{(l,a)} \subset V(G)$ be the 2-path from $l$ to $a$. \\
    Let $B = (N_{\Gamma}(s) \cap S)$ and for $b \in B$ let $Q_b \subset V(G)$ be the 2-path from $s$ to $b$. \\
    Let $C = \left( \bigcup_{b \in B} Q_b \right) \cup (N_{\Gamma}(s) \cap L) \cup \left( \bigcup_{l \in N_{\Gamma}(s) \cap L} A_l \cup P_{(s,l)} \cup \left( \bigcup_{a \in A_l} P_{(l,a)} \right) \right)$ \\
    \Return \texttt{simplify}($G,(L \bs C,S \backslash (\{s\} \cup C), R \cup \{s\} \cup C)$)
    % \Return \texttt{simplify}($G,($)
    \label{simplify:no-l-skeleton}
}
\ElseIf{there exists a vertex $s \in S$ which does not have a vertex $r \in N_G(s) \cap R$ such that $N_{\Gamma}(r) \cap R \neq \emptyset$}{
    For $r \in (N_{\Gamma}(s) \cap R)$, let $A_r$ = $\left(N_{\Gamma}(r) \cap S\right)$, let $P_{(s,r)}$ be the 2-path from $s$ to $r$, and for $a \in A_r$ let $P_{(r,a)} \subset V(G)$ be the 2-path from $r$ to $a$. \\
    Let $B = (N_{\Gamma}(s) \cap S)$ and for $b \in B$ let $Q_b \subset V(G)$ be the 2-path from $s$ to $b$. \\
    Let $C = \left( \bigcup_{b \in B} Q_b \right) \cup (N_{\Gamma}(s) \cap R) \cup \left( \bigcup_{r \in N_{\Gamma}(s) \cap R} A_r \cup P_{(s,l)} \cup \left( \bigcup_{a \in A_r} P_{(r,a)} \right) \right)$ \\
    \Return \texttt{simplify}($G,(L \cup\{s\} \cup C,S \backslash (\{s\} \cup C), R \bs C)$)
    \label{simplify:no-r-skeleton}
}

\Else{
  \Return($G,(L, S, R)$)
  \label{simplify:return}
}        
\end{algorithm} 

\begin{algorithm}[tb]
%\setstretch{1.1}
\SetKwInOut{Algorithm}{Algorithm}
\SetKwInOut{Input}{Input}
\SetKwInOut{Output}{Output}
\Algorithm{\texttt{spider}($s,G,(L,S,R), \mathbf{c}$) - Handles vertex $s$ with neighbor degree (2,2,2)}
\Input{Vertex $s$ with neighbors of degree (2,2,2), Graph $G = (V,E)$, separation $(L,S,R)$ of $G$, cardinality function $\mathbf{c}$}
\Output{$ind(G,\mathbf{c})$}
% \todo[inline]{working: Describe the output more precisely. It is not the number of independent sets of $G$, but depends on $\mathbf{c}$ as well.}

% \If{there exists a vertex $s \in S$ with no neighbor in $L$}{
%     \Return \texttt{\#IS}($G,(L,S \backslash \{s\}, R \cup \{s\})$, \textbf{c})
%     \label{decrease:3is-no-neg-L}
% }
% \ElseIf{there exists a vertex $s \in S$ with no neighbor in $R$}{
%     \Return \texttt{\#IS}($G,(L \cup \{s\},S \backslash \{s\}, R)$, \textbf{c})
%     \label{decrease:3is-no-neg-R}
% }
\If{$|N_{G}(s) \cap R| = 1$}{
    Let $\{r\} = N_{\Gamma}(s) \cap R$ and let $P_r$ be the 2-path from $s$ to $r$.\\
    \If{$r$ does not have neighbor degree (2,2,2)}{
      \Return \texttt{\#3IS}($G, (L \cup P_r,(S \cup \{r\}) \bs \{s\},R \bs (P_r \cup \{r\})), \mathbf{c}$)
      \label{spider:pull-r}
    }
}
\ElseIf{$|N_{G}(s) \cap L| = 1$}{
    Let $\{l\} = N_{\Gamma}(s) \cap L$ and let $P_l$ be the 2-path from $s$ to $l$.\\
    \If{$l$ does not have neighbor degree (2,2,2)}{
      \Return \texttt{\#3IS}($G, (L \bs (P_l \cup \{l\}),(S \cup \{l\}) \bs \{s\},R \cup P_l ), \mathbf{c}$)
      \label{spider:pull-l}
    }
}
\ElseIf{$|N_{\Gamma}(s) \cap S_{\Gamma}| = 1$}{
    Let $\{s'\} = N_{\Gamma}(s) \cap S_{\Gamma}$, $\{l\} = N_{\Gamma}(s) \cap L_{\Gamma}$ and $\{s, s_1, s_2\} = N_{\Gamma}(l)$.\\
    \For{$i \in \{1,2\}$}{
        \If{$s_i \in S$ and $|N_{G}(s_i) \cap R| = 1$}{
            Let $\{r_i\} = N_{G}(s_i) \cap R$\\
            $(L,S,R) := (L \cup \{s_i\}, (S \cup \{r_i\}) \bs \{s_i\} , R \bs \{r_i\})$ \\
            \label{spider:refineLSR-spider-drag}
        }
    }
    Branch on $l$.
    \label{spider:branch-l}
}
\Else{
    Branch on $s$.
    \label{spider:branch-s}
}

  % \item If $|N_{\Gamma}(s) \cap S_{\Gamma}| = 1$ such that $\{s'\} = |N_{\Gamma}(s) \cap S_{\Gamma}|$ then since preprocessing steps have been applied, then $|N_{\Gamma}(s) \cap L_{\Gamma}| = 1$ and $|N_{\Gamma}(s) \cap R_{\Gamma}| = 1$. Since $s$ and $s'$ constitute a center spider vertex pair we attribute $s$ with weight $s_3'$ and $s'$ with weight $s_3$. Let $\{l\} = N_{\Gamma}(s) \cap L_{\Gamma}$ and $\{s, s_1, s_2\} = N_{\Gamma}(l)$. For $i \in \{1,2\}$ if $s_i \in S$ and if $|N_{G}(s_i) \cap R| = 1$ let $\{r_i\} = N_G(s_i) \cap R$ and recompute separation as $(L,S,R) := (L \cup \{s_i\}, S \cup \{r_i\} \bs \{s_i\} , R \bs \{r\})$. Then:

% \[ \text{Branch on $l$}\]  
\end{algorithm}

\section{Running Time Analysis}
This section describes the running time analysis for \texttt{\#IS} and \texttt{\#3IS}, conducted via compound measures. Constraints are presented as branching vectors $(\delta_1, \delta_2)$ which equates to the constraints $2^{-\delta_1} + 2^{-\delta_2} \leq 1$.  We first describe some special vertex weights.

% \textcolor{red}{
% \todo[inline]{Redo the definition of spider vertices. Show that there are two types. Explain that we need 5 associated degree-2 vertices.}
% \subparagraph{Spider Vertices.}
% As Wahlstr\"{o}m's \cite{Wahlstrom08} analysis showed, an unfavorable branching case occurs on vertices of degree 3 which have only degree-2 neighbors. We call these undesirable vertices \emph{spider vertices} and attribute to them a special weight $s_3'$. This structurally applied weight allows amortization of the spider vertex case against non-spider vertices. 

\subsection{Measures}
\label{subsec:measure}
% \todo{Make ``Measure'' a subsection of the section ``Running Time Analysis''.}
% \subparagraph{Measure \texorpdfstring{$\mu__{8/3}$}{TEXT}.}
% \subparagraph{Measure \texorpdfstring{$\mu_{8/3}$}{TEXT}.}
\subparagraph{Measure with no (3,3,3) vertex.}

When using the Separate, Measure and Conquer technique from \cite{GaspersS15} the measure of a cubic graph instance $G$ with no (3,3,3) vertices consists of additive components $\mu_s$ and $\mu_r$, the measure of vertices in the separator, and those in either $L$ or $R$, respectively. Let $S' \subseteq S$ be the set of all spider vertices, $s_i$ and $r_i$ refer to the weight attributed to a separator vertex and a right vertex, in $R$ or $L$, respectively, of degree $i$. Left and right spider vertices have weight $s_3'$. In a center spider vertex pair $s$ and $s'$, one of them has weight $s_3'$ while the other takes on an ordinary weight of $s_3$. These structurally applied weights allows amortization of the spider vertex cases against non-spider vertices. Define the measure $\mu_{8/3}$ as
\begin{equation}
\label{eq:deg3measure}
\mu_{{8}/{3}} = \mu_s(S) + \mu_r(R) + \mu_o(L,S,R),
\end{equation}
where $\mu_s(S) = |S'| \cdot s_3' + \sum_{v \in S\bs S'} s_{d(v)}$, $\mu_r(R) = \sum_{v \in R} r_{d(v)}$, $B = 6 s_3$ and 
\[\mu_o(L,S,R) = \max\left\{0, B - \frac{\mu_r(R) - \mu_r(L)}{2}\right\} + (1+B) \cdot \log_{1 + \epsilon}(\mu_r(R) + \mu_s(S)).\]
% \todo[inline]{
% Hi Edward. It seems you did not understand what I wanted here. Could you replace this by:
% \begin{equation}
% \label{eq:deg3measure}
% \mu_{{8}/{3}} = \mu_s(S) + \mu_r(R) + \mu_o(L,S,R),
% \end{equation}
% where .... and $\mu_o(L,S,R) = \max\left\{0, B - \frac{\mu_r(R) - \mu_r(L)}{2}\right\} + (1+B) \cdot \log_{1 + \epsilon}(\mu_r(R) + \mu_s(S))$.\\
% Now, replace the other occurrences of $o(n)$ by $\mu_o(L,S,R)$ where appropriate.	
% }
We also require that $s_i \geq s_{i-1}$ and $r_i \geq r_{i-1}$ for $i \in \{1,2,3\}$.
The constant $B$ is larger than the maximum change in imbalance in each transformation in the analysis, except the separation transformation. 

% \todo[inline]{This should be changed to a measure in terms of $d$. We can say that when $d = 8/3$ it is upper bounded. We need to construct three new inequalities in a later lemma hooking it up to degree 3 Wahlstrom analysis - those exact three constraints in the book.}
% \subparagraph{Upper bound of \texorpdfstring{$\mu_{8/3}$}{TEXT}.}
\begin{restatable}{lemma}{lemUpperBound}\label{lem:upperBound}
For a balanced separation $(L,S,R)$ of a graph $G$ with average degree $d = d(G)$, an upper bound for the measure $\mu_{8/3}$ is:
\[\mu_{8/3}(d) \leq \begin{cases}
  \frac{n}{6} (d-2) s_3' + \frac{1}{2}\left(\frac{5n}{6} (d-2) r_3 + n(3-d) r_2\right) \\
  + \mu_o(L,S,R) + o(n)  & \text{if } 2 \leq d \leq \frac{28}{11}
   \\
  \frac{n}{4} (8 - 3d) s_3' + \frac{n}{12} (11 d - 28) s_3 + \frac{1}{2}\left(\frac{5n}{6} (d-2) r_3 + n(3-d) r_2 \right) \\
   + \mu_o(L,S,R) + o(n) & \text{if } \frac{28}{11} < d \leq \frac{8}{3} \\
\end{cases}\]
which is maximised when $d = \frac{8}{3}$ with the value
\[\mu_{8/3} \leq \frac{n}{9}s_3 + \frac{1}{2}\left(\frac{5n}{9}r_3 + \frac{n}{3}r_2 \right) + \mu_o(L,S,R) + o(n)\]
if constraints $\frac{r_2}{2} \leq \frac{s_3'}{11} + \frac{5r_3}{22} + \frac{5r_2}{22} \leq \frac{s_3}{9}+ \frac{5r_3}{18} +  \frac{r_2}{3}$ are satisfied.
\end{restatable}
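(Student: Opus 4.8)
The plan is to prove the two piecewise bounds by separately estimating the three additive parts of $\mu_{8/3} = \mu_s(S) + \mu_r(R) + \mu_o(L,S,R)$ and then maximising the resulting expression over $d \in [2,\tfrac83]$. Throughout I may assume, after the reductions and \texttt{simplify} have run, that $G$ has only vertices of degree $2$ and $3$ and that $S$ contains only degree-$3$ vertices (degree-$2$ separator vertices are dragged out by \texttt{simplify}, and lower-degree vertices are removed by the reduction rules, contributing only $o(n)$). Writing $n_2,n_3$ for the numbers of degree-$2$ and degree-$3$ vertices, the identities $n_2+n_3=n$ and $2n_2+3n_3=d\,n$ give $n_3=(d-2)n$ and $n_2=(3-d)n$, which already explains why every coefficient in the statement is affine in $d$. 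The term $\mu_o$ is carried along unchanged: for a balanced separation its first summand is at most the constant $B$ and its logarithmic summand is $O(\log n)=o(n)$.

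First I would bound $\mu_r(R)$. The separation is obtained from Lemma~\ref{lem:sep} applied to the cubic skeleton $\Gamma$, which has $n_3=(d-2)n$ vertices; since a cubic graph on $N$ vertices admits a path decomposition of width $\tfrac{N}{6}+o(N)$, this yields a balanced separation with $|S|\le \tfrac{n_3}{6}+o(n)=\tfrac{(d-2)n}{6}+o(n)$ and $|\mu_r(L)-\mu_r(R)|\le B$. As $S$ is all degree-$3$, $\mu_r(L)+\mu_r(R)=\sum_{v\in V}r_{d(v)}-\sum_{v\in S}r_{d(v)}=(d-2)n\,r_3+(3-d)n\,r_2-\tfrac{(d-2)n}{6}r_3=\tfrac{5(d-2)n}{6}r_3+(3-d)n\,r_2$. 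Balance then gives $\mu_r(R)\le \tfrac12\big(\mu_r(L)+\mu_r(R)\big)+B=\tfrac12\big(\tfrac{5n}{6}(d-2)r_3+n(3-d)r_2\big)+o(n)$, which is exactly the $\mu_r$-contribution in both cases.

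Next I would bound $\mu_s(S)=|S'|s_3'+(|S|-|S'|)s_3$, whose worst case maximises the number $|S'|$ of spider vertices because $s_3'\ge s_3$. Two admissible upper bounds arise. The trivial one assigns $s_3'$ to every separator vertex, giving $\mu_s(S)\le |S|s_3'\le \tfrac{(d-2)n}{6}s_3'+o(n)$ (the first case). The refined one uses the key combinatorial bound $|S'|\le \tfrac{8-3d}{4}n+o(n)$, after which $|S|-|S'|\ge\tfrac{11d-28}{12}n$ vertices receive the lighter weight $s_3$, giving $\mu_s(S)\le \tfrac{n}{4}(8-3d)s_3'+\tfrac{n}{12}(11d-28)s_3+o(n)$ (the second case). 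The two bounds agree exactly where the spider bound meets the separator size, $\tfrac{8-3d}{4}=\tfrac{d-2}{6}$, i.e.\ $d=\tfrac{28}{11}$, and the refined one is the smaller (hence the reported one) precisely for $d>\tfrac{28}{11}$, which is why the case split occurs there. Establishing $|S'|\le\tfrac{8-3d}{4}n+o(n)$ is the heart of the argument and the step I expect to be hardest: a purely graph-theoretic count of $(2,2,2)$-vertices only yields the much weaker $\tfrac{2n_2}{3}$ (at $d=\tfrac83$ this is $\tfrac{2n}{9}$ rather than $0$), so the sharp bound must exploit the invariants maintained by \texttt{simplify} and \texttt{spider} --- in particular that each spider in $S$ is paired with a distinct $(2,2,2)$-partner in $L$, in $R$, or in $S$, so that, together with its three $2$-paths, the structure consumes enough degree-$2$ vertices to force the stated density. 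I would isolate this as a separate counting lemma and prove it by a discharging argument on the skeleton that charges degree-$2$ vertices and direct (length-$0$) skeleton edges against spiders.

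Finally I would combine the parts and maximise over $d$. Adding the $\mu_s(S)$ and $\mu_r(R)$ bounds and the carried $\mu_o$ produces exactly the two displayed affine-in-$d$ expressions, continuous at the breakpoint $d=\tfrac{28}{11}$. Since a continuous piecewise-linear function on $[2,\tfrac83]$ attains its maximum at an endpoint or breakpoint, it suffices to control $d=2$ and $d=\tfrac{28}{11}$. Differentiating the first piece shows it is non-decreasing precisely when $\tfrac{r_2}{2}\le\tfrac{s_3'}{11}+\tfrac{5r_3}{22}+\tfrac{5r_2}{22}$ (both reduce, after clearing denominators, to $6r_2\le 2s_3'+5r_3$); this is the first displayed inequality, rules out $d=2$, and places the maximum of the first piece at $d=\tfrac{28}{11}$. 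The second displayed inequality is the pivot condition comparing the value carried over at $d=\tfrac{28}{11}$ with the value at the pivot $d=\tfrac83$; together with the weight orderings $s_3'\ge s_3$ and $r_3\ge r_2$ it forces the overall maximum to occur at $d=\tfrac83$. Evaluating the second-case expression there gives $\tfrac{n}{9}s_3+\tfrac12\big(\tfrac{5n}{9}r_3+\tfrac{n}{3}r_2\big)+\mu_o+o(n)$, as claimed.
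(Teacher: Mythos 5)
Your proposal follows the paper's proof essentially step for step: the same decomposition $\mu_s(S)+\mu_r(R)+\mu_o$ with the balance argument for $\mu_r(R)$, the same two upper bounds on the number of spider vertices (the pathwidth bound $\tfrac{n(d-2)}{6}$ and the counting bound $\tfrac{n(8-3d)}{4}$) crossing at $d=\tfrac{28}{11}$, and the same piecewise-linear endpoint maximization $f_1(2)\le f_2\bigl(\tfrac{28}{11}\bigr)\le f_2\bigl(\tfrac{8}{3}\bigr)$ yielding the stated constraints and the value at $d=\tfrac{8}{3}$. The counting lemma you defer as the ``hardest step'' is proved in the paper by exactly the argument you sketch: since $G$ has no $(3,3,3)$ vertex, every degree-3 vertex uses at least one of the $2n_2$ edge-incidences at degree-2 vertices, and each spider together with its (by definition, $(2,2,2)$) partner uses $4$ more, giving $\#\text{spiders} \le \tfrac{2n_2-n_3}{4} = n\bigl(2-\tfrac{3}{4}d\bigr)$.
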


\newcommand{\proofLemUpperBound}{
\begin{proof}
Let $d = d(G)$ be the average degree of $G$. For an appropriate upper bound of $\mu_{8/3}$ we first consider the upper bound on the number of separator vertices, also giving us an upper bound on the number of spider vertices:
\begin{equation}
\label{eq:spiderub1}
\text{\#Spiders} \leq |S| \leq \frac{n_3}{6} + o(n_3) = \frac{n(d-2)}{6} + o(n)
\end{equation}
where $n_3 = n(d - 2)$ is the number of degree-3 vertices in $G$, since a subcubic graph with $n_3$ vertices of degree 3 has pathwidth at most $\frac{n_3}{6} + o(n_3)$ \cite{fomin2009two}

As we have no vertex with neighbors of (3,3,3), every degree-3 vertex is incident to an edge incident to a degree-2 vertex. However, each spider vertex has need 4 more edges incident to degree-2 vertices. As the number of edges incident to degree-2 vertices is $2n_2$ where $n_2 = n(3 - d)$ is the number of degree-2 vertices in $G$, and there are at least $n_3$ of those edges taken up to be incident to a degree-3 vertex, then an upper bound on the number of spiders is:
\begin{equation}
\label{eq:spiderub2}
\text{\#Spiders} \leq \frac{2n_2 - n_3}{4} = n\left( 2 - \frac{3}{4}d \right)
\end{equation}

Since both upper bounds are valid for all $2 \leq d \leq 8/3$ then a more accurate upper bound can be found by taking the minimum of Equation \ref{eq:spiderub1} and \ref{eq:spiderub2}. This results in:
\[\text{\#Spiders} \leq \begin{cases}
  \frac{n}{6} (d-2) + o(n) & \text{ if } 2 \leq d \leq \frac{28}{11} \\
  n\left( 2 - \frac{3}{4}d \right) & \text{ if } \frac{28}{11} < d \leq \frac{8}{3} \\
\end{cases}\]

As $|S| \leq \frac{n}{6} (d-2)$ for all $2 \leq d \leq \frac{8}{3}$, with the weight for spider vertices $s_3$ being greater than regular non-spider degree-3 vertices in the separator, then an upper bound for $\mu_{8/3}$ would have as many spider vertices in $S$ as possible for a given average degree $d$. For $2 \leq d \leq \frac{28}{11}$ it is possible to have all vertices in $S$ be spider vertices, so this gives the greatest value of $\mu_{8/3}$. However, from $\frac{28}{11} < d \leq \frac{8}{3}$ we use Equation \ref{eq:spiderub1} to upper bound $|S|$ and also place in $S$ as many spider vertices with weight $s_3'$ as Equation \ref{eq:spiderub2} allows, with the rest of the vertices in $S$ being of weight $s_3$.
\[\mu_{8/3} \leq \begin{cases}
  \frac{n}{6} (d-2) s_3' + \frac{1}{2}\left(\frac{5n}{6} (d-2) r_3 + n(3-d) r_2\right) \\
  + \mu_o(L,S,R) + o(n)  & \text{if } 2 \leq d \leq \frac{28}{11}
   \\
  \frac{n}{4} (8 - 3d) s_3' + \frac{n}{12} (11 d - 28) s_3 + \frac{1}{2}\left(\frac{5n}{6} (d-2) r_3 + n(3-d) r_2 \right) \\
   + \mu_o(L,S,R) + o(n) & \text{if } \frac{28}{11} < d \leq \frac{8}{3} \\
\end{cases}\]
For maximum value, let $f_1(d) = \frac{n}{6} (d-2) s_3' + \frac{1}{2}(\frac{5n}{6} (d-2) r_3 + n(3-d) r_2)$ and $f_2(d) = \frac{n}{4} (8 - 3d) s_3' + \frac{n}{12} (11 d - 28) s_3 + \frac{1}{2}(\frac{5n}{6} (d-2) r_3 + n(3-d) r_2)$. We notice that $f_1$ and $f_2$ are both linear functions in $d$ and $f_1(\frac{28}{11}) = f_2(\frac{28}{11})$ meaning that the endpoints: $f_1(2), f_2\left(\frac{28}{11}\right), f_2\left(\frac{8}{3}\right)$ are the only points of interest. For the measure to not increase on lower degrees, we require that $f_1(2) \leq  f_2\left(\frac{28}{11}\right) \leq f_2\left(\frac{8}{3}\right)$ which results in the constraints
\[\frac{r_2}{2} \leq \frac{s_3'}{11} + \frac{5r_3}{22} + \frac{5r_2}{22} \leq \frac{s_3}{9} + \frac{5r_3}{18} +  \frac{r_2}{3}\]
and the maximum value achieved by $f_2$ when average degree $d = \frac{8}{3}$:
\[\mu_{8/3} \leq f_2\left(\frac{8}{3}\right) + \mu_o(L,S,R) + o(n) = \frac{n}{9}s_3 + \frac{1}{2}\left(\frac{5n}{9}r_3 + \frac{n}{3}r_2 \right) + \mu_o(L,S,R) + o(n)\]

% We wish now to find the upper bound of these two cases, which is valid for all $d$ where $2 \leq d \leq 8/3$. Let $f_1(d) = \frac{n}{6} (d-2) s_3' + \frac{1}{2}(\frac{5n}{6} (d-2) r_3) + o(n)$. We consider fixing all unknowns except $d$ and $r_2 = 0$ 
% \todo{show that $r_2 = 0$ somewhere before and give a reason}
% we know that $f_1$ is linearly increasing in $d$ for all $2 \leq d \leq \frac{28}{11}$. Now let $f_2(d) = \frac{n}{4} (8 - 3d) s_3' + \frac{n}{12} (11 d - 28) s_3 + \frac{1}{2}(\frac{5n}{6} (d-2) r_3)  + o(n)$, once again fixing all other unknowns and also assuming that $s_3' > s_3$. Since $f_1(\frac{28}{11}) = f_2(\frac{28}{11})$ then an upper bound for $\mu_{8/3}$ is simply an upper bound for $f_2(d)$ where $\frac{28}{11} < d \leq \frac{8}{3}$. 

% Now the endpoints of $f_2$ are $f_2(\frac{28}{11}) = \frac{n}{11} s_3' +  \frac{5n}{22} r_3$ and $f_2(\frac{8}{3}) = \frac{n}{9} s_3 + \frac{5n}{18} r_3$. Now as $f_2(d)$ is linear in $d$, it is maximized at one of the endpoints. Then all we need to do to ensure that $f_2(\frac{8}{3})$ is the maximum is to enforce the constraint:
% \[\frac{n}{11} s_3' +  \frac{5n}{22} r_3 \leq \frac{n}{9} s_3 + \frac{5n}{18} r_3\]
\end{proof}
}

\subparagraph{General Measure. } In order to analyze higher degree cases, we use a measure of the form
\[
    \mu_i(G) = \sum_{v \in G} r_{d(v)} + \mu_o(L,S,R) \quad \text{ where } \Delta(G) = i
\]
for each part of the compound measure. The term $\mu_o(L,S,R)$ is the same sub-linear term from the Separate, Measure and Conquer analysis on cubic graphs which needs to be propagated into the higher degree analyses.

\subsection{Degree 3 Analysis}
% \todo{Edward: Not sure I have enough room to outline an algorithm. I am in favor of leaving it.\emph{`Roth [27] presents...' -> isn't that trivial (i.e. these graphs have constant treewidth, but even this is overkill). If so, would be more effective to describe how to do it insterad of citing Roth.}}
% Roth \cite{roth1996hardness} presents a polynomial time algorithm for counting independent sets in graphs $G$ with $\Delta(G) \leq 2$. 
\texttt{\#IS} can be solved in polynomial time when $\Delta(G) \le 2$ \cite{roth1996hardness}.
However, stepping up to cubic graphs is a much harder problem. Greenhill \cite{greenhill2000complexity} proves that counting independent sets is actually a \#P-hard problem even for graphs with maximum degree 3.

\begin{restatable}{lemma}{lemEightThirdRunningTime}\label{lem:eightThirdRunningTime}
% \begin{lemma}[] \label{lem:degree83runtime}
    Algorithm \texttt{\#IS} applied to a graph $G$ with $\Delta(G) \leq 3$ and no $(3,3,3)$ vertex has running time $O(\rteightthirds^n)$.
% \end{lemma}
\end{restatable}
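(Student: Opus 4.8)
The plan is to invoke Lemma~\ref{lem:combinemeasureanalysis} with base $r=2$, compound measure $\mu=\mu_{8/3}$ from \eqref{eq:deg3measure}, subroutine measure $\mu_B=\mu_{8/3}$, polynomial depth measure $\eta(G)=|V(G)|+1$, and $\mathcal{C}$ the class of subcubic graphs without a $(3,3,3)$-vertex, taking $A=\texttt{\#IS}$ and $B=\texttt{\#3IS}$. On such an instance \texttt{\#IS} only executes its base cases, the connected-components rule (line~\ref{algoln:is-conn-comp}), the multiplier reduction, and the lazy-$2$-separator step before handing the graph to \texttt{\#3IS}, so proving the lemma reduces to two tasks: (i) check that every reduction, transformation and branching performed by \texttt{\#IS} and by the subroutines \texttt{\#3IS}, \texttt{simplify} and \texttt{spider} it calls either leaves $\mu_{8/3}$ non-increasing or yields a branching vector $(\delta_1,\delta_2)$ satisfying $2^{-\delta_1}+2^{-\delta_2}\le 1$, i.e.\ \eqref{eq:mu}, while decreasing $\eta$ by at least one so that \eqref{eq:eta} holds; and (ii) convert the resulting bound $O(2^{\mu_{8/3}})$ into a bound in $n$ by applying Lemma~\ref{lem:upperBound}. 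Everything then hinges on fixing one set of weights $s_1,s_2,s_3,s_3',r_1,r_2,r_3$ (with $s_i\ge s_{i-1}$, $r_i\ge r_{i-1}$, $B=6s_3$) and an $\epsilon>0$ that simultaneously validate all the inequalities from (i) and, via Lemma~\ref{lem:upperBound}, give $\mu_{8/3}\le (\log_2\rteightthirds)\,n+o(n)$.

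First I would dispatch the measure non-increasing steps, for which \eqref{eq:mu} holds trivially with $k=1$. The swap in line~\ref{3is:swapLR} merely enforces the convention $\mu_r(R)\ge\mu_r(L)$ and does not change $\mu_{8/3}$. Each rule of \texttt{simplify} (lines~\ref{simplify:no-neg-L}--\ref{simplify:no-r-skeleton}) and each dragging step (lines~\ref{3is:imbaldrag-tobranch}, \ref{spider:pull-r}, \ref{spider:pull-l}) moves a bounded set of vertices together with incident $2$-paths between $L$, $S$ and $R$; here the point is that $\mu_s$ and $\mu_r$ change by a bounded amount that the balance term of $\mu_o$ absorbs, which is exactly why $B$ is chosen larger than the maximum change in imbalance caused by any single such transformation. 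The multiplier reduction replaces a constant-measure pendant subgraph by updated cardinalities and strictly decreases the measure. The one genuinely delicate non-branching step is the re-separation in line~\ref{3is:new-sep}, reached when $S=\emptyset$: Lemma~\ref{lem:sep}, fed with the pathwidth bound $\frac{n}{9}+o(n)$ available for subcubic graphs of average degree at most $\frac{8}{3}$, returns a balanced separation with $|S|\le\frac{n}{9}+o(n)$. Because only $R$ is charged at full $r$-weight while $L$ enters $\mu_{8/3}$ only through the balance term, re-separating trades the full vertex weight for roughly half of it plus the $s$-weight of the new separator; I would record the inequality making this a genuine non-increase among the weight constraints. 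The one-time cost is paid by the $(1+B)\log_{1+\epsilon}(\mu_r(R)+\mu_s(S))$ term, and since this is logarithmic in a quantity that is $O(n)$, the factor $2^{\mu_o}$ it contributes is only $n^{O(1)}$, i.e.\ it is absorbed by the $\eta^{c+1}$ polynomial factor of Lemma~\ref{lem:combinemeasureanalysis}.

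Next I would produce a branching vector for each true branching: the lazy-$2$-separator branch (line~\ref{3is:sep2}), the ``bad imbalanced'' branch on $r'$ (line~\ref{3is:bad_imbal_case}), the ordinary separator branch on $s$ (line~\ref{3is:branch}) over its possible neighbour-degree profiles $(3,3,2)$ and $(3,2,2)$ (the balanced $(2,2,2)$ case being diverted to \texttt{spider} at line~\ref{3is:spidercall}), and the \texttt{spider} branches on $l$ (line~\ref{spider:branch-l}) and on $s$ (line~\ref{spider:branch-s}). For each I would tally the measure drop from removing $x$ in the out-branch and $N[x]$ in the in-branch: the removed vertices leave $\mu_s(S)$ or $\mu_r(R)$, their surviving neighbours suffer forced degree drops that convert $r_3\to r_2$ and $r_2\to r_1$ (resp.\ adjust $s$-weights), and the separator is forced to shrink so that $\eta$ decreases. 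The separation gain is realised when the exhausted separator disconnects the graph and \texttt{\#IS} recurses on the components with fresh separations (line~\ref{algoln:is-conn-comp}): following the Separate, Measure and Conquer method of \cite{GaspersS15}, this gain is distributed back onto the preceding separator branchings through the balance and logarithmic terms of $\mu_o$, and I would verify the corresponding instance of \eqref{eq:mu}, using that \texttt{simplify} keeps the split balanced so that the re-emergence of $L$ as its own component is already accounted for.

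I expect the dominant obstacle to be the spider branch on $s$ in line~\ref{spider:branch-s}. Branching on a $(2,2,2)$-vertex yields the largest branching factor at degree $3$ and, unlike a generic degree-$3$ branch, cannot exploit the usual asymmetry between the in- and out-branches; moreover the use of a separator removes the freedom to defer these branchings. The amortisation rests on assigning such vertices the special weight $s_3'$ and on bounding their total number, which is precisely what Lemma~\ref{lem:upperBound} supplies by taking the minimum of the pathwidth bound on $|S|$ and an edge-counting bound on the number of $(2,2,2)$-vertices; its recorded constraints $\frac{r_2}{2}\le\frac{s_3'}{11}+\frac{5r_3}{22}+\frac{5r_2}{22}\le\frac{s_3}{9}+\frac{5r_3}{18}+\frac{r_2}{3}$ guarantee that the measure does not increase as the average degree drops toward the pivot $\frac{8}{3}$, where the worst-case value of $\frac{1}{n}\mu_{8/3}$ equals $\frac{s_3}{9}+\frac{5r_3}{18}+\frac{r_2}{6}$. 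The proof then closes with a feasibility check: exhibit concrete weights (with the spider and penalty weights and $\epsilon$) satisfying every branching-vector inequality collected above together with the Lemma~\ref{lem:upperBound} constraints, and verify $\frac{s_3}{9}+\frac{5r_3}{18}+\frac{r_2}{6}\le\log_2\rteightthirds$; the running time $O(\rteightthirds^n)$ then follows from Lemma~\ref{lem:combinemeasureanalysis}.
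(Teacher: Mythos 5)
Your plan follows the paper's proof essentially step for step: the paper likewise analyzes \texttt{\#IS}, \texttt{\#3IS}, \texttt{simplify}, and \texttt{spider} case by case against the measure $\mu_{8/3}$ in its balanced and imbalanced regimes, turning each dragging/simplification rule into a weight inequality and each branching into a branching vector (with the spider weight $s_3'$ and a creation penalty $\delta = s_3'-s_3$ amortizing the unavoidable $(2,2,2)$ branchings), and then exhibits feasible weights ($r_2=0$, $r_3=0.2$, $s_3\approx 0.68$, $s_3'=0.7$) giving $\mu_{8/3}\le 0.13262\,n$ and hence $O(\rteightthirds^n)$, using Lemma~\ref{lem:upperBound} exactly as you propose to pass from the measure to a bound in $n$. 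What you defer --- the explicit derivation of each branching vector and the numerical feasibility check --- is precisely the body of the paper's proof, but your outline reduces the lemma to the same set of tasks, identifies the same tight cases (the \texttt{spider} branch and the balanced $(2,2,3)$ branch), and invokes the same supporting lemmas.
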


\shortversion{
\subparagraph{Proof sketch.}
We present a sketch of the proof, emphasizing the tight constraints generated from \texttt{\#3IS}, \texttt{simplify} and \texttt{spider}. A complete analysis will be deferred until the appendix. As suggested in \cite{GaspersS15}, each case will provide constraints that the weights described above will need to satisfy. 

Some trivial constraints we must satisfy are $r_0 = r_1 = s_0 = s_1 = 0$ since these vertices can easily be eliminated and require no branching rules. Our algorithm considers skeleton graph vertices, and several rules drag entire 2-paths from one separation to another, requiring $r_2 = 0$. In \texttt{simplify}, line \ref{simplify:bal-drag} implies the constraint $s_2 + s_3' + \frac{1}{2}(r_2 - r_3) \leq 0$, enabling us to move a degree-3 vertex into the separator by dragging out a degree-2 vertex.

From \texttt{\#3IS}, line \ref{3is:new-sep} imposes the constraint $\frac{1}{6}s_3' + \frac{5}{12}r_3 \leq r_3$.
If a (2,2,3) vertex $s$ is chosen to branch on in line \ref{3is:branch} as shown in Figure \ref{fig:non-spider-branching}(a), then we get the constraint $\left(s_3 + \Delta s_3 + \frac{1}{2}( 2\Delta r_3) - 3 \delta, s_3 + 2 \Delta s_3 + \frac{1}{2}(r_3 + 2 \Delta r_3 ) - 4 \delta \right)$. The last tight constraint is from \texttt{spider} line \ref{spider:branch-s}, displayed in Figure \ref{fig:spider-branching}(a), giving the constraint $\left(s_3' + \frac{3}{2}\Delta r_3, s_3' + \frac{3}{2}\Delta r_3 \right)$.

While the cases in Figure \ref{fig:non-spider-branching}(b) and Figure \ref{fig:spider-branching}(b) are not tight, they are of interest since these cases branch on vertices located outside the separator and it is guaranteed that $s$ is removed from the separator after branching.
}

% \todo[inline]{Sketch proof, or other explanations to fill up 12 pages.}
 \begin{figure}[!htb]
    \centering
    \begin{minipage}{0.33\textwidth}
        \centering
        \begin{tikzpicture}[yscale=0.5, xscale=0.9]
            \node[vertex, label=above:$s$] (1) at (0,0) {};

            % in R
            \node[vertex] (2) at (1,0) {};
            \node[vertex] (3) at (1,-1) {};
            \node[vertex] (4) at (1.5,0) {};
            \node[vertex] (5) at (1,1) {};

            % in L
            \node[vertex] (6) at (-1,0) {};
            \node[vertex] (7) at (-1.5,-0.5) {};

            \draw (0,0) ellipse (0.5 and 2.5);
            \draw (1.6,0) ellipse (1 and 2.5);
            \draw (-1.6,0) ellipse (1 and 2.5);
            \node[draw=none, label=above:$S$] (slabel) at (0,2.5) {};
            \node[draw=none, label=above:$R$] (rlabel) at (1.6,2.5) {};
            \node[draw=none, label=above:$L$] (llabel) at (-1.6,2.5) {};
            \node[draw=none, label=below:\small (a) Balanced branching on $s$] (slabel) at (0,-2.5) {};
            
            \draw (0,1)--(5)--(2)--(1)--(3)--(0,-1) (2)--(4)--(2, 0.5) (4)--(2,-0.5); % in R
            \draw (1)--(6)--(7) (-2,1)--(7)--(0,-0.5);

            \end{tikzpicture}
        % \caption{Balanced branching on $s$.}
    \end{minipage}%
    \begin{minipage}{0.33\textwidth}
        \centering
        \begin{tikzpicture}[yscale=0.5, xscale=0.9]
            \node[vertex] (1) at (0,0) {};

            % in R
            \node[vertex] (2) at (1,0.5) {};
            \node[vertex, label=right:$r$] (3) at (1.5,0.5) {};
            \node[vertex] (4) at (1.5,1.5) {};
            \node[vertex] (5) at (1.5,-0.5) {};
            \node[vertex] (6) at (1.5,-1.5) {};
            \node[vertex] (7) at (1,-0.5) {};

            \draw (0,0) ellipse (0.5 and 2.5);
            \draw (1.6,0) ellipse (1 and 2.5);
            \draw (-1.1,0) ellipse (0.5 and 2.5);
            \node[draw=none, label=above:$S$] (slabel) at (0,2.5) {};
            \node[draw=none, label=above:$R$] (rlabel) at (1.6,2.5) {};
            \node[draw=none, label=above:$L$] (llabel) at (-1.1,2.5) {};
            
            \draw (4)--(3)--(2)--(1)--(7)--(0,-01) (3)--(5)--(6) (0,1)--(4) (1,-1.5)--(6)--(2,-1.5); % in R
            \draw (1)--(-1,-0);

            \draw (0,0) ellipse (0.5 and 2.5);
            \draw (1.6,0) ellipse (1 and 2.5);
            \draw (-1.1,0) ellipse (0.5 and 2.5);
            \node[draw=none, label=above:$S$] (slabel) at (0,2.5) {};
            \node[draw=none, label=above:$R$] (rlabel) at (1.6,2.5) {};
            \node[draw=none, label=above:$L$] (llabel) at (-1.1,2.5) {};
            \node[draw=none, label=below:\small (b) Imbalanced branching on $r$] (slabel) at (0,-2.5) {};
            
            \draw  ; % in R
            \draw (1)--(-1,0);
        \end{tikzpicture}
        % \caption{Imbalanced branching on $r$.}
    \end{minipage}
    \begin{minipage}{.33\textwidth}
        \centering
        \begin{tikzpicture}[yscale=0.5, xscale=0.9]
            \node[vertex, label=above:$s$] (1) at (0,0) {};

            % in R
            \node[vertex] (2) at (1,1) {};
            \node[vertex] (3) at (1,-1) {};
            \node[vertex] (4) at (1.5,1) {};
            \node[vertex] (5) at (1.5,-1) {};

            \draw (0,0) ellipse (0.5 and 2.5);
            \draw (1.6,0) ellipse (1 and 2.5);
            \draw (-1.1,0) ellipse (0.5 and 2.5);
            \node[draw=none, label=above:$S$] (slabel) at (0,2.5) {};
            \node[draw=none, label=above:$R$] (rlabel) at (1.6,2.5) {};
            \node[draw=none, label=above:$L$] (llabel) at (-1.1,2.5) {};
            \node[draw=none, label=below:\small (c) Imbalanced branching on $s$] (slabel) at (0,-2.5) {};
            
            \draw (4)--(2)--(1)--(3)--(5) (2,1.5)--(4)--(2,0.5) (2,-1.5)--(5)--(2,-0.5); % in R
            \draw (1)--(-1,-0);

        \end{tikzpicture}
        % \caption{Imbalanced branching on $s$.}        
    \end{minipage}%
    \caption{Worst case configurations for non-spider vertex branching in \texttt{\#3IS}}
    \label{fig:non-spider-branching}
\end{figure}

\begin{figure}[ht]
    \begin{minipage}{.5\textwidth}
        \centering
        \begin{tikzpicture}[yscale=0.5]
            \node[vertex, label=above:$s$] (1) at (0,0) {};

            % in R
            \node[vertex] (2) at (1,1) {};
            \node[vertex] (3) at (1,-1) {};

            % in L
            \node[vertex] (4) at (-1,0) {};
            \node[vertex, label=left:$l$] (5) at (-1.5,0) {};
            \node[vertex] (6) at (-1.5,1) {};
            \node[vertex] (7) at (-1.5,2) {};
            \node[vertex] (8) at (-1.5,-1) {};
            \node[vertex] (9) at (-1.5,-2) {};

            \draw (0,0) ellipse (0.5 and 2.5);
            \draw (1.2,0) ellipse (0.6 and 2.5);
            \draw (-1.5,0) ellipse (0.8 and 2.5);
            \node[draw=none, label=above:$S$] (slabel) at (0,2.5) {};
            \node[draw=none, label=above:$R$] (rlabel) at (1.2,2.5) {};
            \node[draw=none, label=above:$L$] (llabel) at (-1.6,2.5) {};
            \node[draw=none, label=below:\small (a) Balanced branching on $l$] (slabel) at (0,-2.5) {};
            
            \draw (2)--(1)--(3)--(0,-2) (1.5,1.5)--(2)--(1.5,0.5);
            \draw (9)--(8)--(5)--(6)--(7) (5)--(4)--(1) (-1,-1.5)--(9)--(-2,-1.5) (-1,1.5)--(7)--(-2,1.5);
        \end{tikzpicture}
        % \caption{Spider vertex $s$ but balanced branching on $l$.}

    \end{minipage}%
    \begin{minipage}{.5\textwidth}
        \centering
        \begin{tikzpicture}[yscale=0.5]
            \node[vertex, label=above:$s$] (1) at (0,0) {};

            % in R
            \node[vertex] (2) at (1,1) {};
            \node[vertex] (3) at (1,-1) {};
            \node[vertex] (4) at (1.5,1) {};
            \node[vertex] (5) at (1.5,-1) {};

            % in L
            \node[vertex] (6) at (-1,0) {};
            \node[vertex] (7) at (-1.7,0) {};

            \draw (0,0) ellipse (0.5 and 2.5);
            \draw (1.6,0) ellipse (1 and 2.5);
            \draw (-1.6,0) ellipse (1 and 2.5);
            \node[draw=none, label=above:$S$] (slabel) at (0,2.5) {};
            \node[draw=none, label=above:$R$] (rlabel) at (1.6,2.5) {};
            \node[draw=none, label=above:$L$] (llabel) at (-1.6,2.5) {};
            \node[draw=none, label=below:\small (b) Balanced branching on $s$] (slabel) at (0,-2.5) {};
            
            \draw (4)--(2)--(1)--(3)--(5) (2,1.5)--(4)--(2,0.5) (2,-1.5)--(5)--(2,-0.5); % in R
            \draw (1)--(6)--(7) (-2,-0.5)--(7)--(-2,0.5);
        \end{tikzpicture}
        % \caption{Balanced branching on spider vertex $s$.}
    \end{minipage}

    \caption{Worst case configurations for spider vertex branching in \texttt{spider}}
    \label{fig:spider-branching}
\end{figure}

\newcommand{\proofLemEightThirdRunningTime}{
\subparagraph{Proof.}
We will analyze the running time with respect to the measure $\mu_{8/3}$ described above. 
As suggested in \cite{GaspersS15} we will provide constraints that these weights need to satisfy, and the provided values minimize the measure. 
The measure $\mu_{8/3}$ can be viewed in two regimes; a balanced separation, where $\mu_r(R) - \mu_r(L) \leq 2B$ resulting in $\mu_{8/3} = \mu_s(S) + \frac{1}{2}( \mu_r(R) - \mu_r(L)) + \mu_o(L,S,R)$ and an imbalanced separation, where $\mu_r(R) - \mu_r(L) > 2B$ resulting in $\mu_{8/3} = \mu_s(S) + \mu_r(R) + \mu_o(L,S,R)$. 
To characterize decreases in vertex degrees, let $\Delta s_i = s_i - s_{i-1}$ and $\Delta r_i = r_i - r_{i-1}$. Trivial constraints are 
% \shortversion{$r_0 = r_1 = 0$ and $s_0 = s_1 = 0$.}
% \longversion{
\begin{equation*} \label{eq:trivial-cons}   
    \begin{aligned}
        r_0 = r_1 = 0 \quad\quad& s_0 = s_1 = 0.
    \end{aligned} 
\end{equation*}
% }
Our algorithm handles 2-paths as if they were single edges. Therefore we constrain that $r_2 = 0$.

\subparagraph{Constraints from \texttt{\#IS}}
Simplification rules in lines \ref{algoln:is-empty} to \ref{algoln:is-conn-comp} in \texttt{\#IS} take polynomial time. If we are given a graph $G$ with $\Delta(G) \leq 3$ and no $(3,3,3)$ vertex and the lazy 2-separator rule in line \ref{algoln:is-sep2} did not apply, then we enter the subroutine \texttt{\#3IS}. 

\subparagraph{Constraints from \texttt{simplify}}
The simplification rules in \texttt{simplify} either reduce the separator size by removing a vertex or the rule drags degree-2 vertices in $S$ away making $S$ consist only of degree-3 vertices.
For vertex dragging to $R$ in line \ref{simplify:no-neg-L} of \texttt{simplify}, the most constraining instances are the balanced ones:
% \shortversion{$- s_d + r_d \leq 0 $ where $d \in \{2,3\}$ and $ -s_3' + r_3 \leq 0$.}
% \longversion{
\begin{equation*}
    - s_d + r_d \leq 0 \text{ where } d \in \{2,3\} \text{ and } -s_3' + r_3 \leq 0 \enspace.
    \label{eq:vertexdragright}
\end{equation*}
% }
However, for vertex dragging to $L$ in \ref{simplify:no-neg-L}, the imbalanced instances are most constraining
% \shortversion{$- s_d + 1/2 \cdot r_d \leq 0 $ where $ d \in \{2,3\} $ and $ -s_3' + 1/2 \cdot r_3 \leq 0$}
% \longversion{
\[
    - s_d + 1/2 \cdot r_d \leq 0 \text{ where } d \in \{2,3\} \text{ and } -s_3' + 1/2 \cdot r_3 \leq 0
\]
% }
but this is no more constraining than line \ref{simplify:no-neg-L}.

Line \ref{simplify:bal-drag} drags to $R$ the degree-2 separator vertex $s$ and a 2-path, ending in a vertex $l$ which is either in $S$ or has degree 3, which itself is dragged into $S$. This most constraining in the balanced case
% \shortversion{$-s_2 + s_3' + \frac{1}{2} \cdot (r_2 - r_3) \leq 0.$}
% \longversion{
\begin{equation*}
    -s_2 + s_3' + \frac{1}{2} \cdot (r_2 - r_3) \leq 0;
    \label{eq:pathdraggingright}
\end{equation*}
% }
In line \ref{simplify:imbal-drag} the most constraining case is
% \shortversion{$-s_2 + s_3' - r_3 \leq 0$.}
% \longversion{
\begin{equation*}
    -s_2 + s_3' - r_3 \leq 0 \enspace.
\end{equation*}
% }

The operations in line \ref{simplify:no-l-skeleton} drag neighbors and associated 2-paths from $L$ into $R$, also removing $s \in S$. Since $r_2 = 0$ we can simplify the most constraining case, which is imbalanced, to: $-s_3 + 2 r_3\leq 0$. Line \ref{simplify:no-r-skeleton} is most constraining in the balanced case, which induces the constraint $-s_3 \leq 0$.

\begin{claim}
\label{claim:always-2-r3-in-2nd-neg}
    After \texttt{simplify} has been applied to a graph $G$ and it's separation $(L,S,R)$, for $s \in S$ there exists $r \in N_{\Gamma}(s) \cap R$ such that $N_{\Gamma}(r) \cap R \neq \emptyset$, and also there exists $l \in N_{\Gamma}(s) \cap L$ such that $N_{\Gamma}(l) \cap L \neq \emptyset$
\end{claim}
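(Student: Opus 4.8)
The plan is to argue that the claim holds simply because it is the negation of the termination conditions of the two final branching rules in \texttt{simplify}, namely lines \ref{simplify:no-l-skeleton} and \ref{simplify:no-r-skeleton}. The key observation is that \texttt{simplify} is a recursive procedure that only returns through the final \texttt{Else} branch at line \ref{simplify:return}; whenever any of the guard conditions in lines \ref{simplify:no-neg-L}, \ref{simplify:no-neg-R}, the degree-2 case, \ref{simplify:no-l-skeleton}, or \ref{simplify:no-r-skeleton} is satisfied, the procedure modifies the separation and recurses. Hence, once \texttt{simplify} terminates and outputs $(L,S,R)$, none of these guards can hold on the output. First I would record precisely the negations: line \ref{simplify:no-l-skeleton} fires whenever there exists some $s \in S$ that \emph{lacks} a neighbor $l \in N_{\Gamma}(s)\cap L$ with $N_{\Gamma}(l)\cap L \neq \emptyset$; its negation is exactly the ``left'' half of the claim, stated for every $s\in S$. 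Symmetrically, line \ref{simplify:no-r-skeleton} gives the ``right'' half.

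The main step is therefore to verify that \texttt{simplify} indeed \emph{terminates}, so that the output genuinely satisfies all negated guards simultaneously; an infinite recursion would make the postcondition vacuous. I would argue termination via a potential/measure decrease: each of the recursive rules in lines \ref{simplify:no-neg-L}--\ref{simplify:no-r-skeleton} either strictly decreases $|S|$ (when a separator vertex is dragged out into $L$ or $R$ without anything new entering $S$), or it strictly decreases the number of degree-2 vertices in $S$ while never increasing $|S|$ (the degree-2 dragging rules of lines \ref{simplify:bal-drag} and \ref{simplify:imbal-drag} remove a degree-2 vertex from $S$ and replace it by a degree-3 vertex $l$ or $r$). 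Using the lexicographic measure $(|S|, |\{v\in S : d(v)=2\}|)$, every recursive call strictly decreases this measure, which is bounded below, so the recursion bottoms out at line \ref{simplify:return}.

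Having established termination, the conclusion is immediate: the output $(L,S,R)$ is produced by the \texttt{Else} branch, so the guards of lines \ref{simplify:no-l-skeleton} and \ref{simplify:no-r-skeleton} both fail, which is precisely the two-sided statement of the claim. I expect the main obstacle to be the termination argument for the degree-2 dragging rules: one must check that dragging a 2-path together with its endpoint $l$ (or $r$) into the appropriate side, while pulling $l$ (or $r$) into $S$, does not create a new degree-2 vertex in $S$ and does not otherwise increase $|S|$, so that the chosen lexicographic measure genuinely decreases. Care is also needed to confirm that after the skeleton-rebalancing rules of lines \ref{simplify:no-l-skeleton}--\ref{simplify:no-r-skeleton} the set $C$ moved out of $S$ strictly reduces $|S|$; I would verify this by noting that $s$ and all of $C$ leave $S$ while nothing is added to $S$, giving a net decrease.
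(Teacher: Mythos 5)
Your proposal is correct and takes essentially the same approach as the paper: the paper's own one-line proof argues by contraposition that any $s \in S$ violating the claim would trigger line \ref{simplify:no-l-skeleton} or \ref{simplify:no-r-skeleton} and be removed from $S$, which is exactly your observation that the claim is the conjunction of the negated guards holding at the terminal \texttt{Else} branch (line \ref{simplify:return}). Your extra termination argument via the lexicographic measure $(|S|,\,|\{v \in S : d(v) = 2\}|)$ is sound and fills in a detail the paper leaves implicit.
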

\begin{proof}
    If there is a vertex $s$ that does not satisfy the claim, then line \ref{simplify:no-l-skeleton} or \ref{simplify:no-r-skeleton} would trigger and remove $s$ from $S$.
\end{proof}

\subparagraph{Constraints from \texttt{spider}}
The first two conditions of lines \ref{spider:pull-r} and \ref{spider:pull-l} in \texttt{spider} aim to drag into the separator a (2,2,3) or (2,3,3) vertex in order to branch more efficiently on. In the worst case there is no change in measure since $s$ is replaced by $r$ in the separator. Since the separation $(L,S,R)$ is balanced, moving $P_r$ and $r$ or $P_l$ and $l$ also does not change the measure as $L$ and $R$ contribute equally to $\mu_{8/3}$.

In line \ref{spider:refineLSR-spider-drag}, $s$ is a center spider vertex with attributed weight $s_3'$. We branch on $l \in L$, which is a skeleton neighbor of $s$. The for loop drags vertices which are skeleton neighbors of $l$ with no change in measure so that when $l$ is branched on, it obtains a decrease in measure of at least $\frac{3}{2}r_3$ by it's neighbors. However, we choose $l$ to branch on because on both subproblems, branching on $l$ causes the removal of $s$ from the separator as it no longer has neighbors in $L$. This results in the branching constraint:
\[ \left(s_3' + \frac{1}{2}(r_3 + 2\Delta r_3), s_3' + \frac{1}{2}(r_3 + 2\Delta r_3) \right) \enspace.\]

Line \ref{spider:branch-s} finds a valid left or right spider vertex and branches on it, resulting in the constraints
\[ \left(s_3' + \frac{3}{2}\Delta r_3, s_3' + \frac{3}{2}\Delta r_3 \right) \enspace.\]

\subparagraph{Constraints from \texttt{\#3IS} - Computing Separator.}
Much like in \cite{GaspersS15}, computing a new separator in line \ref{3is:new-sep} of \texttt{\#3IS} imposes the constraint
% \shortversion{$\frac{s_3'}{6} + \frac{5}{12} r_3 \leq r_3$, or $s_3' \leq \frac{7}{2} r_3$.}
% \longversion{
\begin{equation*}
    \begin{aligned}
    &s_3' / 6 + 5/12 \cdot r_3 < r_3 \text{, or} 
    &s_3' < 7/2 \cdot r_3.
    \end{aligned}
    \label{eq:newseparator} 
\end{equation*}  
% }
In line \ref{3is:simplifycall} the algorithm simplifies the graph $G$ and it's separation $(L,S,R)$ through a call to \texttt{simplify}, which itself imposes new constraints. 

The reduction rule in line \ref{3is:imbaldrag-tobranch} is the same as the constraints for line \ref{simplify:imbal-drag} in \texttt{simplify}. We now deal with branching on lazy-2 separators and regular branching, in both imbalanced and balanced cases, separately. As decreasing a degree-3 vertex to a degree-2 vertex may result in the introduction of a spider vertex $s_3'$ from $s_3$, let $\delta = s_3' - s_3$ be the increase in measure from a spider vertex creation, offset by either a $\Delta s_3$ or $\frac{1}{2}\Delta r_3$ decrease in measure.

\subparagraph{Constraints from \texttt{\#3IS} - Balanced Lazy 2-Separator Branching}
Suppose the instance is balanced and \texttt{\#3IS} selects a vertex $s \in S$ but $s$ has a lazy 2-separator $\{y,z\}$ which line \ref{3is:sep2} of \texttt{\#3IS} branches on instead of $s$.
As the degree-3 vertices $y$, $z$ and $s$ are all removed in the branches of this problem, as well as the fact that due to Claim \ref{claim:always-2-r3-in-2nd-neg} for $L$ and $R$ there will be another degree-3 vertex that will be removed, we obtain the branching vector
\[ \left(s_3 + \frac{1}{2}(2 r_3 + 2\Delta r_3) - 2 \delta, s_3 + \frac{1}{2}(2 r_3 + 2\Delta r_3) - 2 \delta \right) \enspace.\]
The worst case contains measure increases of $2 \delta$ since the two decreases of $\frac{1}{2}\Delta r_3$ could create a spider vertex, and there are at least 2 of them. We could have more $\delta$ decreases, but this only occurs when we have a $\frac{1}{2} \Delta r_3$ decrease, or $\Delta s_3$ decrease in the worst case. But since $\delta \leq \Delta s_3 \leq \frac{1}{2} \Delta r_3$ the tightest constraint occurs at the smallest number of $\delta$ possible.
% Let $u,v,w \in N_G(x)$, and notice that for all degree-2 neighbors, their removal causes the reduction of a degree-3 vertex elsewhere in the graph. In the balanced case, the measure can be simplified to $\mu_{\nicefrac{8}{3}} = \mu_s(S) + \frac{1}{2} (\mu_r(R) + \mu_r(L)) + \mu_o(L,S,R)$, resulting in the following branching vector.
% \begin{equation*} \label{eq:3is-bal-sep2-cons} 
%     \begin{aligned}
%         \Bigg(s_3 + \frac{1}{2} \bigg( r_3 + \sum_{\hspace{-5mm} i \in \{u,v,w\} \hspace{-5mm}} r_{d(i)} + \mathds{1}(d(i) = 2) \cdot \Delta r_3 \bigg), 
%         s_3 + \frac{1}{2} \bigg( r_3 + \sum_{\hspace{-5mm} i \in \{u,v,w\} \hspace{-5mm}} r_{d(i)} + \mathds{1}(d(i) = 2) \cdot \Delta r_3 \bigg) \Bigg).
%     \end{aligned}
% \end{equation*}

\subparagraph{Constraints from \texttt{\#3IS} - Imbalanced Lazy 2-Separator Branching}
Once again, we have vertices $s \in S$ and a lazy 2-separator $\{y,z\}$, but the instance is imbalanced. First assume either 1 or more of $\{y,z\}$ is in $R$. In this case, we disconnect $s$, a $y$ or $z$, as well as some other vertex $r \in R$ due to Claim \ref{claim:always-2-r3-in-2nd-neg}. At worst this results in the branching vector $(s_3 + 2 r_3, s_3 + 2 r_3)$

In the case where $\{y,z\} \in L$ also divert to Claim 1 which guarantees that there is a skeleton neighbor $r \in N_{\Gamma}(s) \cap R$, which itself has a neighbor $r' \in N_{\Gamma}(r) \cap R$. These two combined with $s$ are removed in both branches, otherwise $s$ cannot be removed and $\{y,z\}$ is not a lazy-2 separator. This also results in the branching vector $(s_3 + 2 r_3, s_3 + 2 r_3)$

% Of the three neighbors in $N(s)$, at least one, $u \in N(s)$, is in $R$.
% First, consider the case where $\{y,z\} \cap S \ne \emptyset$ where we obtain the branching vector \shortversion{$(2 s_3, 2 s_3)$.}
% \longversion{
% \begin{equation*} \label{eq:3is-imbal-sep2-sep-cons} 
%      (2 s_3, 2 s_3)\enspace.
% \end{equation*}
% }
% Otherwise, one vertex among $\{y,z\}$ is in $R$, its removal results in a reduction of $r_3$ in $R$. Since both $d(y) \geq 3$ and $d(z) \geq 3$, there exists another degree-3 vertex not in $N(s)$ which is reduced to degree 2 or lower. 
% This results in the branching vector \shortversion{$(s_3 + r_3 + \Delta r_3, s_3 + r_3 + \Delta r_3)$.}
% \longversion{
% \begin{equation*} \label{eq:3is-sep2-constraint} 
%   (s_3 + r_3 + \Delta r_3, s_3 + r_3 + \Delta r_3)\enspace.
% \end{equation*}
% }

\subparagraph{Constraints from \texttt{\#3IS} - Balanced Branching: neighbor in separator} 
Consider the balanced branching case where we branch on $s \in S$ and $s$ has a neighbor $s' \in S$. 
Let $u \in R$ and $v \in L$ denote the two other neighbors. In the worst case, $u$ and $v$ are both degree-2 vertices, meaning in both branches we only reduce a vertex of weight $r_3$ to $r_2$, but never delete one. Since $s'$ reduces in degree in the first branch and is removed in the second branch, we get the following branching vector
\[ \left( s_3 + \Delta s_3 + \frac{1}{2}(2 \Delta r_3) - 3 \delta, 2s_3 + \frac{1}{2}(2 \Delta r_3) - 2 \delta\right)  \enspace.\]

\subparagraph{Constraints from \texttt{\#3IS} - Balanced Branching: no neighbor in separator.} Next consider the balanced branching case where the algorithm branches on a non-spider vertex $s \in S$ with no neighbors in the separator $S$. Let $u, u' \in R$ and $v \in L$ denote its neighbors. Since $s$ is a non-spider vertex then $s$ is either a (2,2,3) or (2,3,3) vertex.

We first consider $s$ as a (2,2,3) vertex. In the worst case, the single degree-3 vertex of weight $\frac{r_3}{2}$ would be in $R$ or $L$ since a weight of $s_3 > r_3$, and in practice it is much larger. Of the two remaining neighbors, they are the start of a 2-path to another degree-3 vertex. Now both of these cannot be in $S$ so we will have a decrease of at least $\frac{\Delta r_3}{2}$, leaving a decrease of $\Delta s_3$ for the last neighbor. 

In the second case, we also get a decrease of $\Delta s_3 + \frac{\Delta r_3}{2}$ from the degree-3 neighbor of $s$. This is due to Claim \ref{claim:always-2-r3-in-2nd-neg} forcing at least 1 of the neighbors to be in $R$. This results in a branching vector of
\[\left(s_3 + \Delta s_3 + \frac{1}{2}( 2\Delta r_3) - 3 \delta, s_3 + 2 \Delta s_3 + \frac{1}{2}(r_3 + 2 \Delta r_3 ) - 4 \delta \right)  \enspace.\]

Now if $s$ is a (2,3,3) vertex, we get 2 degree 3 neighbors of $s$. In the worst case, the degree 2 neighbor of $s$ is the start of a 2-path to another vertex in $S$.
\[ \left( s_3 + \Delta s_3 + \frac{1}{2}(2 \Delta r_3) - 3 \delta, s_3 + 3 \Delta s_3 + \frac{1}{2}(2 r_3 + 2 \Delta r_3) - 5 \delta \right)  \enspace.\]

\subparagraph{Constraints from \texttt{\#3IS} - Imbalanced Branching: neighbor in separator.} 
In the imbalanced instances of $G$ the measure $\mu_{8/3}$ simplifies to $\mu_{\nicefrac{8}{3}} = \mu_s(S) + \mu_r(R) + \mu_o(L,S,R)$.
Suppose we choose $s \in S$ to branch on and $s$ has a neighbor $s' \in S$. By Claim \ref{claim:always-2-r3-in-2nd-neg}, $s$ has a skeleton neighbor $r \in N_{\Gamma}(s) \cap R$. Now in the worst case, $r$ is only a skeleton neighbor, and the actual neighbor $r' \in N_{G}(s) \cap R$ is of degree 2. By considering the removal, or reduction of degree, of $s$, $s'$ and $r'$ then we get the following worst case constraint
\[(s_3 + \Delta s_3 + r_3 - 3 \delta, 2 s_3 + r_3 + 5 \delta) \enspace.\]
The first branch has a $3 \delta$ term since we get at most 1 decrease for each neighbor. The $5 \delta$ term comes from the fact that the left neighbor $l \in N_{G}(s) \cap L$ does not contribute any weight to $\mu_{8/3}$ meaning it could be degree 3. Now $s'$ is also of degree 3, so in the second case where we remove $s'$ and $l$, these two could create 4 spider vertices. The last possible increase comes from $r$ being reduced to a degree-2 vertex.

%  with $2 \leq d(u), d(v) \leq 3$ and $d(u) + d(v) < 6$ as there are no (3,3,3) vertices.
%  We have the following branching vectors for admissible values of $d(u)$ and $d(v)$.
% \begin{equation*}
%   \bigg( s_3 + \Delta s_3 + \Delta r_{d(u)} + \mathds{1}(d(u) = 2) \cdot \Delta r_3 - 2 \delta, \quad
%     2s_3 + r_{d(u)} + 3 \Delta r_3 - (d(u) + d(v)) \delta
%   \bigg).
% \end{equation*}
% We omit vertices in $L$ as the imbalanced case has a measure of $\mu_{\nicefrac{8}{3}} = \mu_s(S) + \mu_r(R)$. The terms which encapsulate $\delta$, don't change in the imbalanced case as while vertices in $L$ contributes no weights to the measure, it may structurally alter the graph so that some vertices in $S$ become spider vertices.

\subparagraph{Constraints from \texttt{\#3IS} - Imbalanced Branching: no neighbors in separator.} There are two branching rules to consider in this case. First first branching occurs in line \ref{3is:bad_imbal_case} where instead of branching on $s \in S$ we branch on one of its skeleton neighbors in $R$. The other case occurs when we branch on $s$ as normal in line \ref{3is:branch}.

In line \ref{3is:bad_imbal_case}, we are given the case where $s$ has 1 skeleton neighbor in $R$. This means that we don't get a beneficial branching by branching on $s$. However, in a similar method to line \ref{spider:branch-l} of \texttt{spider}, if we branch on $r \in N_{\Gamma}(s) \cap R$ such that $N_{\Gamma}(r) \cap R \neq \emptyset$, then in both branches, we are able to remove $s$ entirely from the separator due to the simplification rules in \texttt{simplify}. We get the following worst case constraint
\[(r_3 + s_3 + \Delta r_3 + \Delta s_3 - 3 \delta, r_3 + s_3 + \Delta r_3 + \Delta s_3 - 3 \delta)  \enspace.\]

Otherwise, we progress to line \ref{3is:branch}, which guarantees that we have 2 skeleton neighbors of $s$ in $R$. This results in the following constraint
\[(s_3 + 2 \Delta r_3 - 3 \delta, s_3 + 2 \Delta r_3 - 4 \delta)  \enspace.\]

} % end restatable
\longversion{\proofLemmaEightThirdRunningTime}

% Suppose we branch on a vertex $s \in S$ with neighbors $u, u' \in R$, $v \in L$. We are guaranteed two neighbors in $R$ due to line \ref{algoln:3is-imbal-drag} of \texttt{\#3IS}. As before, the degrees of the vertices are bounded $2 \leq d(u), d(u'), d(v) \leq 3$ and $6 \leq d(u) + d(u') + d(v) < 9$. The left equality exists since it is possible to branch on a degree 3 separator vertex with neighbors (2,2,2) that is not a spider vertex. This case is less constraining than branching on a spider vertex, so the analysis here can ignore the $s_3'$ weight.
% \begin{equation*}
%   \left( s_3 + \left(\sum_{i \in \{u,u'\} \hspace{-3mm}} \Delta r_{d(i)} +  \mathds{1} (d(i)=2) \cdot \Delta r_3 \right) - 3 \delta,
%   s_3 + \left(\left\lceil \frac{\omega}{2} \right\rceil \cdot \Delta r_3 + \sum_{i \in \{u,u'\} \hspace{-3mm}} r_{d(i)} \right) - \omega \cdot \delta \right).
% \end{equation*}
% \longversion{
% One consideration in the branching vector is the term $\left\lceil \frac{\omega}{2} \right\rceil$ which is altered from $\omega$ in the balanced case as we ignore weights in $L$. 
% }

% \todo{weights all need to be redo if $\Delta s_3$ argument above holds and it is actually the worst case.}
% \todo{Edward: do I need to say sometihng about convex programming or is that done above}

\newcommand{\proofWeightsEightThirdRunningTime}{
\subparagraph{Weights and Results.}

The combination of all constraints obtained in this way, minimizing the measure results in the measure of $\mu_{\nicefrac{8}{3}} = 0.13262 \cdot n$, and that the running time is $O(2^{\mu_{\nicefrac{8}{3}}}) \subseteq O(2^{0.13262 n})$ results in an upper bound of $O(\rteightthirds^n)$. The specific weights are summarized below.
\shortversion{
\[
    r_0 = r_1 = r_2 = 0,\quad r_3 = 0.2 + o(n),\quad s_0 = s_1 = 0,\quad s_2 = 0.6,\quad  s_3 = 0.6838,\quad s_3' = 0.7
\]
}
\longversion{
\[\begin{aligned}
    r_0 = 0 &\quad r_1 = 0 &r_2 = 0 \quad\quad\text{ } &\quad r_3 = 0.2 + o(n) &\\
    s_0 = 0 &\quad s_1 = 0 &\quad s_2 = 0.6352 &\quad s_3 = 0.6784 \quad & s_3' = 0.7\\
\end{aligned}\]
}
}
\proofWeightsEightThirdRunningTime % include in both versions

\begin{lemma}[] \label{lem:degree3runtime}
    Algorithm \texttt{\#IS} applied to a graph $G$ with $d(G) \leq 3$ has running time $O(\rtcubic^n)$ and uses polynomial space.
\end{lemma}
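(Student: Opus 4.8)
The plan is to derive Lemma~\ref{lem:degree3runtime} as a single application of the compound-measure scheme of Lemma~\ref{lem:combinemeasureanalysis}, with \texttt{\#3IS} playing the role of the subroutine $B$ and the degree-3 branching of \texttt{\#IS} playing the role of the outer algorithm $A$. Concretely, I would take $\mathcal{C}$ to be the class of subcubic graphs with no $(3,3,3)$ vertex, $r=2$, $\eta(G)=|V(G)|$ (which drops by at least one in every branch, so \eqref{eq:eta} is immediate), and $c$ a constant absorbing the polynomial cost of the simplification rules, the connected-components rule, the multiplier reduction, the lazy-2-separator test, and the separator computation of Lemma~\ref{lem:sep}. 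The outer measure is the general measure $\mu_3(G)=\sum_{v\in V(G)} r_{d(v)} + \mu_o(L,S,R)$, while $\mu_B=\mu_{8/3}$ is the measure analysed in Lemma~\ref{lem:eightThirdRunningTime}; the bound $O(\rteightthirds^n)=O(\eta^{c+1}2^{\mu_{8/3}})$ proved there is exactly what Lemma~\ref{lem:combinemeasureanalysis} requires of $B$.

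First I would discharge the handover hypothesis $\mu_{8/3}(I)\le\mu_3(I)$ for every $I\in\mathcal{C}$. Every member of $\mathcal{C}$ has average degree at most $\nicefrac{8}{3}$: since no vertex is $(3,3,3)$, each degree-3 vertex has a neighbour of degree at most $2$, and a counting of the edges incident to degree-$\le 2$ vertices forces $n_3\le 2n_2$, hence $d(G)\le\nicefrac{8}{3}$. Following Wahlström's pivot-point idea I impose equality $\mu_{8/3}=\mu_3$ at $d=\nicefrac{8}{3}$ and then observe that $\mu_{8/3}$ charges weight only to $S$ and to the lighter of the two sides, whereas $\mu_3$ charges every vertex; combined with the monotonicity of $\mu_{8/3}$ in $d$ recorded in Lemma~\ref{lem:upperBound} and the smallness of the separator ($|S|=o(n)$), this yields $\mu_{8/3}\le\mu_3$ throughout the regime $d\le\nicefrac{8}{3}$. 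In parallel I would prove an analogue of Lemma~\ref{lem:upperBound} for $\mu_3$: under concavity/monotonicity constraints on the $r_d$, the linear program maximising $\sum_d r_d\,n_d$ subject to $\sum_d n_d=n$ and $\sum_d d\,n_d\le 3n$ is maximised by the cubic graph, giving $\mu_3(G)\le r_3\,n+o(n)$ for every $G$ with $d(G)\le 3$. Setting $r_3=\log_2 \rtcubic$ then gives $\mu_3(G)\le \log_2(\rtcubic)\,n+o(n)$.

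Next I would verify the branching constraint \eqref{eq:mu} in every case where $A$ genuinely branches, i.e.\ on a subcubic instance still containing a $(3,3,3)$ vertex (the connected-components rule, multiplier reduction, and lazy-2-separator rule are treated exactly as in the \texttt{\#3IS} analysis and do not increase $\mu_3$). The governing case is a branch on a $(3,3,3)$ vertex $v$: the out-branch deletes $v$ and lowers its three neighbours to degree $2$, for a decrease of at least $r_3+3\Delta r_3$ with $\Delta r_3=r_3-r_2$, while the in-branch deletes $N[v]$ and lowers the remaining second-neighbours, for a strictly larger decrease, so the branching vector has the form $(r_3+3\Delta r_3,\;4r_3+k\Delta r_3)$ with $k\ge 0$ fixed by the local structure and by the secondary choice of maximum associated average degree $\alpha(v)/\beta(v)$. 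I would check $2^{-\delta_1}+2^{-\delta_2}\le 1$ for the worst such vector, and note that branching on a vertex of degree $\ge 4$ (which can occur only transiently, since $d(G)\le 3$ and neither the average nor the maximum degree increases) removes strictly more weight and hence imposes a strictly weaker constraint.

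The main obstacle is the simultaneous feasibility of all these constraints with a small enough $r_3$: the pivot equality at $d=\nicefrac{8}{3}$ couples $r_2,r_3$ to the separator weights $s_2,s_3,s_3'$ of the \texttt{\#3IS} analysis, the $(3,3,3)$ branch forces $\Delta r_3$ to be large, and the upper-bound argument forces $r_3$ itself to be small; reconciling these is a small convex feasibility problem whose solution pins $r_3=\log_2\rtcubic$. Once a consistent assignment is exhibited (via the same numerical optimisation that produced the $\mu_{8/3}$ weights), Lemma~\ref{lem:combinemeasureanalysis} delivers a running time $O(\eta^{c+1}2^{\mu_3(G)})=O(\rtcubic^n)$. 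Polynomial space is then immediate: \texttt{\#IS} and \texttt{\#3IS} are depth-first recursions whose stack frames store only the current subgraph, its separation, and the cardinality function, the recursion depth is $O(n)$, and every reduction step, including the multiplier reduction and the separator computation of Lemma~\ref{lem:sep}, runs in polynomial time and space with no exponential-size tables.
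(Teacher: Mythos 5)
Your overall architecture (Lemma~\ref{lem:combinemeasureanalysis} with \texttt{\#3IS} as the subroutine $B$, a compound measure with a pivot at $d=\nicefrac{8}{3}$, and a $(3,3,3)$-branching analysis for the outer algorithm) is the same as the paper's, but two steps fail. The fatal one is that you reuse the weights $r_2,r_3$ of the $\mu_{8/3}$ analysis as the weights of the outer measure $\mu_3$, coupling them through a pivot \emph{equality}. The paper instead introduces \emph{fresh} weights: intermediate weights $w_3,w_2$ for subcubic graphs with no $(3,3,3)$ vertex, and outer weights $w_3',w_2'$ for general subcubic graphs, tied to $s_3',s_3,r_3,r_2$ only by the \emph{inequalities} that Lemma~\ref{lem:upperBound} induces at the breakpoints $d=2,\nicefrac{28}{11},\nicefrac{8}{3}$, plus the pivot identity $2w_3+w_2=2w_3'+w_2'$. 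This decoupling is not cosmetic. The $\mu_{8/3}$ analysis forces $r_2=0$ (2-paths are handled as single edges there, since whole 2-paths are dragged across the separator), and with $r_2=0$ your $(3,3,3)$ branching vector $\left(r_3+3\Delta r_3,\,4r_3+k\Delta r_3\right)$ collapses in its worst case to $(4r_3,4r_3)$: when all second neighbours have degree 2, the in-branch gains nothing beyond deleting $N[v]$, which is exactly your $k=0$ case. Feasibility then requires $2\cdot 2^{-4r_3}\le 1$, i.e.\ $r_3\ge\nicefrac{1}{4}$, giving at best $2^{n/4}\approx 1.1892^n$; the value $r_3=\log_2\rtcubic\approx 0.1876$ that you want your feasibility problem to ``pin'' is simply infeasible in this setup. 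The paper's positive outer weight $w_2'=0.0228$ exists precisely so that degree-2 second neighbours still contribute to the in-branch (at the price of $3w_2'$ in the out-branch), which is what makes its tight vector $(4w_3'-3w_2',\,8w_3'-4w_2')$ satisfiable at $w_3'=0.1876$.

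The second gap is your justification of the handover hypothesis $\mu_{8/3}(I)\le\mu_3(I)$. You argue via monotonicity plus ``$|S|=o(n)$'', but the separator produced through Lemma~\ref{lem:sep} has size up to $n_3/6+o(n)=\Theta(n)$ (the pathwidth bound), and $\mu_{8/3}$ charges $S$ and the \emph{heavier} side $R$ (the balance term then effectively averaging the two sides), not the lighter one. Since $\mu_s(S)$ is a linear-in-$n$ quantity with $s_3,s_3'$ much larger than $r_3$, comparing $\mu_{8/3}$ against a linear measure is exactly the nontrivial content of Lemma~\ref{lem:upperBound}: one must check the piecewise-linear bound (including the spider-vertex accounting and the $\nicefrac{28}{11}$ breakpoint) against the outer linear measure at all breakpoints, which is what the paper's constraints $\nicefrac{r_2}{2}\le w_2$, $\nicefrac{s_3'}{11}+\nicefrac{5r_3}{22}+\nicefrac{5r_2}{22}\le\nicefrac{6w_3}{11}+\nicefrac{5w_2}{11}$, and $\nicefrac{s_3}{9}+\nicefrac{5r_3}{18}+\nicefrac{r_2}{3}\le\nicefrac{2w_3}{3}+\nicefrac{w_2}{3}$ accomplish. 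Your polynomial-space paragraph and your observation that instances without $(3,3,3)$ vertices have $d(G)\le\nicefrac{8}{3}$ are correct and match the paper.
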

The algorithm \texttt{\#IS} uses subroutine \texttt{\#3IS}, which we analyze the measure and the weights for. 
We equate the Separate, Measure and Conquer weights with weights of the measure $\mu_3$, based on the compound analysis from Wahlstr\"{o}m \cite{Wahlstrom08}. 
As Wahlstr\"{o}m's analysis only contains weights $w_3'$ and $w_2'$, for vertices of degree 3 and degree 2 respectively, the measure is
\[ 
  \mu_3(G) = \left((d - 2) w_3' + (3 - d) w_2' \right)n + \mu_o(L,S,R)
\]
where $d = d(G)$ is the average degree of a cubic graph, and $\mu_o(L,S,R)$ is the sub-linear term left over from the average degree $\nicefrac{8}{3}$ analysis. 

In the case of a graph $G$ with no (3,3,3) vertex, in order for Lemma \ref{lem:combinemeasureanalysis} to apply, the values of $w_1$ and $w_2$ must satisfy inequalities
\shortversion{
  $
  \frac{r_2}{2} \leq w_2,
  $
}
\longversion{
\[
    \frac{1}{2}r_2 \leq w_2,
\]
}
\shortversion{
  $
  \frac{s_3'}{11} + \frac{5 r_3}{22} + \frac{5 r_2}{22} \leq \frac{6 w_3}{11} + \frac{5 w_2}{11},
  $
}
\longversion{
\[
  \frac{s_3'}{11} + \frac{5 r_3}{22} + \frac{5 r_2}{22} \leq \frac{6 w_3}{11} + \frac{5 w_2}{11},
\]
}
\shortversion{
  $
  \frac{s_3}{9} + \frac{5 r_3}{18} + \frac{r_2}{3} \leq \frac{2 w_3}{3} + \frac{ w_2}{3},
  $
}
\longversion{
\[
  \frac{s_3}{9} + \frac{5 r_3}{18} + \frac{r_2}{3} \leq \frac{2 w_3}{3} + \frac{ w_2}{3},
\]
}
induced when $d = 2, \frac{28}{11}$, and $\frac{8}{3}$ for $\mu_{8/3}$ respectively. This results in the weights $w_3 = 0.1973$ and $w_2 = 0.0033$ when $G$ has no (3,3,3) vertex.

% In the case of a graph $G$ with no (3,3,3) vertex, we use the following values
% \shortversion{
%   $
%   w_3 = \frac{1}{10} s_3' + \frac{1}{15} s_3 + \frac{5}{12} r_3$ and $w_2 = \frac{1}{2} r_2.
%   $
% }
% \longversion{
% \[
%     w_3 = \frac{1}{10} s_3' + \frac{1}{15} s_3 + \frac{5}{12} r_3 ,\quad w_2 = \frac{1}{2} r_2.
% \]
% }
We also let $w_3' \geq 0$ and $w_2' \geq 0$ be the weights associated with vertices of degree 3 and degree 2 respectively, for a subcubic graph $G$. Using the analysis by compound measures with $\mu_3(G) = \sum_{i \in \{2,3\}} w_i' \cdot n_i$, the following constraint
\shortversion{$\mu_{\nicefrac{8}{3}}(G) = \mu_{3}(G)$ when $d(G) = \nicefrac{8}{3}$}
\longversion{
\[
    \mu_{\nicefrac{8}{3}}(G) \leq \mu_{3}(G), \quad \text{ when $d(G) = \nicefrac{8}{3}$ }
\]
}
is required for a valid compound measure. This can be rewritten as
\shortversion{$2 w_3 + w_2 = 2 w_3' + w_2'$.}
\longversion{
\begin{equation*}
    2 w_3 + w_2 \leq 2 w_3' + w_2'.
\end{equation*}
}
Branching on a (3,3,3) vertex, the only type of degree-3 vertex that will be branching in \texttt{\#IS}, gives a branching vector of 
\shortversion{$(4w_3' - 3w_2', 8w_3' - 4w_2')$.} 
\longversion{
\begin{equation*}
    (4w_3' - 3w_2', 8w_3' - 4w_2').
\end{equation*}
}
Setting the weights $w_3' = 0.1876$ and $w_2' = 0.0228$ satisfies the system of constraints described above and by using the measure $\mu_3(G)$, results in a running time of $O^*(\rtcubic^n)$.

% The listed weights below are optimal for $\mu(G)$ and satisfy the system of constraints described above.

% \begin{figure}[ht]
%     \centering
%     \begin{tabular}{|c|c|c|c|}
%     \hline
%     Average Degree & $w_2$   &  $w_3$ & Time      \\
%     \hline
%     \nicefrac{8}{3} - 3  & 0.0225 & 0.1874 & $O^*(\rtcubic^n)$ \\
%     \hline
%     \end{tabular}
%     \caption{Weights and running time for $\mu_3(G)$}
% \end{figure}

\newcommand{\degreeFourAnalysis}{
\subsection{Degree-4 Analysis}

\label{subsec:Degree4}
For a graph with maximum degree 4, analysis is done with a measure of 
\[\mu_4 = \sum_{i \leq 4} w_i \cdot n_i + \mathds{1} \left(
\parbox{20em}{$G$ has only degree-4 and degree-2 vertices and no degree-4 vertex has a degree-4 neighbor}
\right)\psi + \mu_o(L,S,R)\]
where $w_i$ are weights attributed to vertices of degree $i$, $n_i$ are the number of vertices with degree $i$ and $\psi$ is a potential.
We can ignore lower weights since due to simplification rules we have for vertices of with degree 0 or 1.

\subparagraph{Potentials in Degree-4 Analysis. } Potentials are used for branching on a degree-4 vertex $v$ with only degree-2 neighbors. In case (a), we have that all 2-paths starting from $v$, have endpoints of degree 4. Case (b) has at least one 2-path from $v$ that ends up in a degree-3 vertex.

\begin{figure}[ht!]
    \centering
    \begin{tabular}{|c|c|c|}
    \hline
    Degrees of Neighbors & Highest Average Degree & Branching \\
    \hline
    (2,2,2,2) (a) &             3     & $\tau(5w_4 - 4w_3 +4w_2 - \psi, 5w_4 - 4w_3 +4w_2 - \psi) $ \\
    (2,2,2,2) (b) &             3     & $\tau(4w_4 - 2w_3 +3w_2 + \psi, 4w_4 - 2w_3 +3w_2 + \psi) $ \\
    (2,2,2,3) &                 3     & $\tau(4w_4 - 2w_3 +2w_1, 4w_4 - 2w_3 +3w_2) $ \\
    (2,2,2,4) &                 3     & $\tau(5w_4 - 4w_3 +3w_2, 6w_4 - 4w_3 +3w_2) $ \\
    (2,2,3,3) &                 3     & $\tau(3w_4, 5w_4 - 2w_3 +2w_2)$ \\
    (2,2,3,4) &                 3     & $\tau(4w_4 - 2w_3 +w_2, 5w_4 - 2w_3 +2w_2) $ \\
    (2,2,4,4) &                 3     & $\tau(5w_4 - 4w_3 +2w_2, 7w_4 - 4w_3 +2w_2) $ \\
    (2,3,3,3) & 16/5          = 3.2   & $\tau(2w_4 +2w_3 - 2w_2,4w_4 +w_2) $ \\
    (2,3,3,4) & 42/13 $\approx$ 3.23  & $\tau(3w_4 - w_2, 6w_4 - 2w_3 +w_2) $ \\
    (2,3,4,4) & 36/11 $\approx$ 3.27  & $\tau(4w_4 - 2w_3, 6w_4 - 2w_3 +w_2) $ \\
    (2,4,4,4) & 10/3  $\approx$ 3.33  & $\tau(5w_4 - 4w_3 +w_2, 8w_4 - 4w_3 +w_2) $ \\
    (3,3,3,3) & 24/7  $\approx$ 3.43  & $\tau(w_4 +4w_3 - 4w_2, 5w_4)$ \\
    (3,3,3,4) &  7/2          = 3.5   & $\tau(2w_4 + 2w_3 -  3w_2, 5w_4)$ \\
    (3,3,4,4) & 18/5          = 3.6   & $\tau(3w_4 - 2w_2, 7w_4 - 2w_3)$ \\
    (3,4,4,4) & 15/4          = 3.75  & $\tau(4w_4 -  2w_3 -  w_2, 7w_4 -  2w_3)$ \\
    (4,4,4,4) & 4 & $\tau(5w_4 - 4w_3, 9w_4 - 4w_3)$ \\
    \hline
    \end{tabular}
    \caption{Possible cases when branching on a degree-4 vertex}
    \label{fig:deg4table}
\end{figure}
}
\longversion{\degreeFourAnalysis}

\begin{restatable}{lemma}{lemWahlDegFour}\label{lem:wahlDegFour}
    For a graph $G$ with maximum degree 4, \#IS can be solved in time $O^*(\rtquad^n)$.
\end{restatable}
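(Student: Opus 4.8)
The plan is to carry out a Measure and Conquer analysis of \texttt{\#IS} restricted to the degree-$4$ regime and to invoke Lemma~\ref{lem:combinemeasureanalysis} with a higher-degree measure $\mu_4$, bootstrapping off the subcubic bound of Lemma~\ref{lem:degree3runtime}. I would take $\mu_4 = \sum_{v} w_{d(v)} + \mu_o(L,S,R)$, where $w_2,w_3,w_4$ are degree weights satisfying the usual monotonicity $w_2 \le w_3 \le w_4$, and augment it by a constant potential $\psi$ that is switched on exactly when the instance has degenerated into the worst-case structure in which every vertex has degree $2$ or $4$ and no two degree-$4$ vertices are adjacent. The sublinear term $\mu_o$ is inherited verbatim from the subcubic analysis so that it threads through the compound measure and contributes only a polynomial factor. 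On an instance of maximum degree $4$, \texttt{\#IS} either finishes in polynomial time (maximum degree at most $2$), splits into connected components, applies the multiplier or lazy-$2$-separator reductions (neither of which increases $\mu_4$), hands a subcubic instance to \texttt{\#3IS}, or branches on a maximum-degree vertex $x$ chosen to maximize the associated average degree $\alpha(x)/\beta(x)$.

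The core of the proof is the case distinction on the multiset of neighbor degrees of a degree-$4$ branching vertex $x$ (fifteen profiles in all, with the worst one, $(2,2,2,2)$, split further into two subcases). For each profile I would compute the two branch measures: in the ``$x$ excluded'' branch $x$ is deleted, while in the ``$x$ included'' branch $N[x]$ is deleted; in both branches the degrees of affected vertices decrease, degree-$\le 1$ vertices are simplified away, and every $2$-path is contracted, exactly as in the subcubic analysis. Each profile thus yields a constraint of the form $2^{-\delta_1}+2^{-\delta_2}\le 1$. The selection rule (maximum degree, then maximum associated average degree) is what licenses taking the worst surviving profile: the dense profiles such as $(4,4,4,4)$ or $(2,4,4,4)$ bind whenever they are present, and the sparse profiles become relevant only once the graph is already close to subcubic, which keeps their comparatively weak branching vectors from dominating.

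The principal obstacle is the $(2,2,2,2)$ profile, where branching on a degree-$4$ vertex all of whose neighbors have degree $2$ produces, on its own, a branching vector far too weak for the target $\rtquad$. This is precisely what the potential $\psi$ amortizes. I would split this profile by whether every $2$-path leaving $x$ terminates at a degree-$4$ vertex (case~a) or at least one terminates at a degree-$3$ vertex (case~b), yielding symmetric branching vectors of the respective forms $(5w_4-4w_3+4w_2-\psi,\,5w_4-4w_3+4w_2-\psi)$ and $(4w_4-2w_3+3w_2+\psi,\,4w_4-2w_3+3w_2+\psi)$. The delicate point is to justify the signs: a branching that moves the instance into the trapped all-$\{2,4\}$ structure must be charged the $\psi$ penalty, whereas the escaping branching earns the $\psi$ credit, and the dedicated branching rule of \texttt{\#IS} (which prefers a degree-$4$ vertex with a $2$-path to a degree-$3$ vertex) must be shown to always supply such an escaping branching whenever the instance is trapped, so that the potential released never exceeds the potential invested. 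Verifying this bookkeeping is the crux of the argument.

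Finally, I would close the system by coupling it with the subcubic analysis through Lemma~\ref{lem:combinemeasureanalysis}: at the handoff to \texttt{\#3IS}, which occurs once no degree-$4$ vertex remains, I require $\mu_3 \le \mu_4$, imposing equality at the cubic pivot $d(G)=3$; this relates $w_4,w_3,w_2,\psi$ to the subcubic weights $w_3',w_2'$ and secures the hypothesis $\mu_B(I)\le\mu(I)$ of the lemma. Assembling the per-profile inequalities, the two $(2,2,2,2)$ inequalities, the monotonicity conditions, and the coupling constraint gives a small linear program in the weights; solving it to minimize the maximum value of $\mu_4$ should give $\mu_4 \le \log_2(\rtquad)\cdot n + o(n)$, whereupon Lemma~\ref{lem:combinemeasureanalysis} delivers the claimed $O^*(\rtquad^n)$ running time in polynomial space. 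I expect the $(2,2,2,2)$ cases together with one or two of the densest profiles to be the binding constraints.
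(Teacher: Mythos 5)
You have assembled the right ingredients---the degree-weight measure with the inherited sublinear term $\mu_o$, the potential $\psi$ attached to the all-$\{2,4\}$ structure, the sixteen neighbor-degree profiles with $(2,2,2,2)$ split into cases (a) and (b), and the handoff to the subcubic analysis via Lemma~\ref{lem:combinemeasureanalysis}---but your final step collapses everything into a \emph{single} linear program over one weight vector $(w_2,w_3,w_4,\psi)$ with one coupling constraint at the pivot $d(G)=3$, and that single-measure analysis cannot reach the target exponent. In a standard Measure and Conquer analysis every profile constraint must be satisfied whenever that profile can ever occur during the recursion, so all sixteen inequalities enter the same LP; the selection rule by associated average degree only tells you \emph{when} a sparse profile can be chosen, and a single linear measure has no mechanism to exploit that timing. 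Concretely, on a $4$-regular input the bound delivered is $2^{w_4 n}$ up to polynomial factors, so you need $w_4\le\log_2 \rtquad\approx 0.2713$; but then the $(3,3,3,3)$ constraint $2^{-(w_4+4w_3-4w_2)}+2^{-5w_4}\le 1$ forces $4(w_3-w_2)\ge 0.44$, whereas the $(2,2,2,2)$ constraints (whichever of (a)/(b) carries the potential with the tightening sign) force $4(w_3-w_2)\le 0.36$ up to $|\psi|$, and closing that gap through $|\psi|$ makes the other $(2,2,2,2)$ inequality push $w_2$ so high that $w_3>w_4$, which is inadmissible since a degree-4 vertex dropping to degree 3 would then increase the measure. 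So your LP is infeasible at $0.2713$, and its true optimum proves only a weaker bound.

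The paper's proof avoids this precisely by not using one weight vector: following Wahlstr{\"o}m's compound analysis, the measure is piecewise linear with pivot points $3$, $3.2$, $3.5$, $3.75$, $4$ \emph{inside} the degree-4 regime, with a separate weight vector on each average-degree interval (Figure~\ref{fig:deg4measure}); each branching vector from Figure~\ref{fig:deg4table} is imposed only on the intervals below that profile's ``highest average degree'' (the quantity your selection-rule remark gestures at), and consecutive regimes are glued by requiring that the measure not increase at each pivot, i.e., by iterating Lemma~\ref{lem:combinemeasureanalysis} at every pivot rather than only at $d=3$. This is what allows $w_4\approx 0.3295$ in the sparse regimes, where the weak $(2,2,2,\cdot)$ vectors bind, while the dense regime $[3.75,4]$ that governs $4$-regular inputs gets $w_4=0.2713$ and hence $O(\rtquad^n)$. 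A secondary issue: your verbal bookkeeping for $\psi$ (charge branchings that enter the trapped structure, credit the branching that leaves it) contradicts the signs you then write down, since case (a) is executed \emph{inside} the trapped structure and destroys it (its branches create degree-3 endpoints), while case (b) is the one that can create the structure; as written, your two $(2,2,2,2)$ inequalities are simultaneously satisfiable only with $\psi<0$. Either swap the signs or allow $\psi<0$ explicitly, but state the invariant consistently before wiring it into the constraints.
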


\newcommand{\proofWahlDegFour}{
\begin{proof} 
    The degree-4 analysis uses pivot points 3, 3.2, 3.5, 3.75 and 4, shown as different rows of Figure \ref{fig:deg4measure}. 
    Pivot points generate multiple compound measures with weights and constraints for each. 
    By including constraints generated from the table of branching factors in Figure \ref{fig:deg4table}, we gain satisfying weights for $\mu_4$, shown in Figure \ref{fig:deg4measure}.
    This results in a running time upper bound of $O(2^{\mu_4 n}) \subseteq O(2^{0.2713 n}) \subseteq O(\rtquad^n) $ in the worst case for degree-4 graphs.
\end{proof}

\begin{figure}[th!]
    \centering
    \begin{tabular}{|c|c|c|c|c|}
        \hline
        Average Degree & $w_2$ & $w_3$ & $w_4$ & Time \\
        \hline
        % 2-3      & 0.0224710 & 0.1873238 & 0.3296286 & $O(1.1387^n)$ \\
        % 3-3.2    & 0.0657504 & 0.1873238 & 0.2863492 & $O(1.1544^n)$ \\
        % 3.2-3.5  & 0.0793788 & 0.1895952 & 0.2772636 & $O(1.1757^n)$ \\
        % 3.5-3.75 & 0.0911461 & 0.1935177 & 0.2733412 & $O(1.1921^n)$ \\
        % 3.74-4   & 0.1057474 & 0.1997753 & 0.2712553 & $O(\rtquad^n)$ \\
        2-3      & 0.0227913 & 0.1875202 & 0.3295266 & $O(1.13880^n)$ \\
		3-3.2    & 0.0659881 & 0.1875202 & 0.2863298 & $O(1.15451^n)$ \\
		3.2-3.5  & 0.0795475 & 0.1897802 & 0.2772902 & $O(1.17571^n)$ \\
		3.5-3.75 & 0.0911988 & 0.1936639 & 0.2734064 & $O(1.19207^n)$ \\
		3.74-4   & 0.1057321 & 0.1998925 & 0.2713302 & $O(\rtquad^n)$ \\
        \hline
    \end{tabular}
    \caption{Component measures $\sum_i w_i \cdot n_i$ for maximum degree 4}
    \label{fig:deg4measure}
\end{figure}
}
\longversion{\proofWahlDegFour}

\newcommand{\degreeFiveAnalysis}{
\subsection{Degree-5+ Analysis}

The following two theorems show for degree-5+ graphs the generalized procedure for constructing branching vectors for $v$ and all its possible combinations of degrees of neighbors.

\begin{lemma}

Suppose a graph $G$ is 3-connected. Let $v \in V(G)$ be a vertex to be branched on in $\texttt{\#IS}$ with $d(v) = \{5, 6\}$. Let $out(v)$ be  the number of outgoing edges of type $(u,u')$ such that $u \in N(v)$ and $u' \notin N(v) \cup \{v\}$. Then
\begin{align*}
    \label{eq:outformula}
    out(v) = 
    \begin{cases}
        3 & \text{If }   d(v) = 5 \text{ and } \sum_{u \in N(v)} d(u) = 0 \text{ mod } 2 \text{ or } \\
          & \text{\quad} d(v) = 6 \text{ and } \sum_{u \in N(v)} d(u) = 1 \text{ mod } 2 \\
        4 & \text{If }   d(v) = 5 \text{ and } \sum_{u \in N(v)} d(u) = 1 \text{ mod } 2 \text{ or } \\
          & \text{\quad} d(v) = 6 \text{ and } \sum_{u \in N(v)} d(u) = 0 \text{ mod } 2 \\
        5 & \text{ If neighbors of $v$ have degree (2, 2, 2, 2, 2) or (2, 2, 2, 2, 2, 3)} \\
        6 & \text{ If neighbors of $v$ have degree (2, 2, 2, 2, 2, 2)}. \\
    \end{cases}
\end{align*}
\end{lemma}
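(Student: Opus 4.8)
The plan is to set up a single double-counting identity for the edges incident to $N(v)$ and then read off both the parity and the magnitude of $out(v)$ from it. Writing $e_{\mathrm{in}}$ for the number of edges with both endpoints in $N(v)$, every edge incident to a vertex of $N(v)$ is either an edge to $v$ (exactly $d(v)$ of these, one per neighbour), an internal edge (counted twice in the degree sum), or an outgoing edge (counted once). Summing degrees over $N(v)$ therefore gives
\[
  \sum_{u \in N(v)} d(u) = d(v) + 2 e_{\mathrm{in}} + out(v),
\]
so that $out(v) = \sum_{u \in N(v)} d(u) - d(v) - 2 e_{\mathrm{in}}$. Since $2 e_{\mathrm{in}}$ is even, this immediately yields $out(v) \equiv \sum_{u \in N(v)} d(u) - d(v) \pmod 2$, which after substituting $d(v) \in \{5,6\}$ produces exactly the parity split of the first two cases: for $d(v)=5$ the quantity $out(v)$ is odd iff $\sum_{u} d(u)$ is even, and for $d(v)=6$ it is odd iff $\sum_u d(u)$ is odd.

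Next I would pin down the magnitude using $3$-connectivity. Assuming $V \setminus N[v] \neq \emptyset$ (otherwise the component is small and handled directly), the outgoing edges form precisely the edge cut separating $N[v]$ from the rest of $G$. Since the edge connectivity of a graph is at least its vertex connectivity, a $3$-connected graph is $3$-edge-connected and every such cut has size at least $3$; hence $out(v) \geq 3$. Combining this lower bound with the parity computed above settles the first two cases: when the parity forces $out(v)$ odd the worst case is $out(v)=3$, and when it forces $out(v)$ even the bound sharpens to $out(v)\geq 4$. These are the values entering the branching vectors, since a smaller $out(v)$ corresponds to the least favourable branch.

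For the remaining two cases I would exploit a structural consequence of $3$-connectivity for degree-$2$ neighbours. If $u \in N(v)$ has degree $2$, its unique non-$v$ edge cannot go to another vertex of $N[v]$: if it went to some $w \in N(v)$, then $N(u)=\{v,w\}$ and $\{v,w\}$ would be a $2$-cut isolating $u$ from the (nonempty) rest of $G$, contradicting $3$-connectivity. Hence every degree-$2$ neighbour of $v$ contributes exactly one outgoing edge and no internal edge. Counting these forced outgoing edges one per degree-$2$ neighbour then gives $out(v) \geq 5$ when $v$ has five degree-$2$ neighbours (the $(2,2,2,2,2)$ and $(2,2,2,2,2,3)$ patterns) and $out(v) \geq 6$ when all six neighbours have degree $2$ (the $(2,2,2,2,2,2)$ pattern); for $(2,2,2,2,2)$ and $(2,2,2,2,2,2)$ no internal edges are possible at all, so there these bounds are exact.

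I expect the main obstacle to be the careful deployment of $3$-connectivity in the special cases: one must justify that the ``rest of $G$'' is nonempty so that the forbidden adjacency genuinely yields a $2$-cut, and then verify that the forced-outgoing count from the low-degree neighbours dominates the generic parity bound and is the quantity actually required by the branching analysis. A secondary point to check is that the case split is exhaustive over $d(v)\in\{5,6\}$ and the two parities, and that in the generic cases the values $3$ and $4$ are attained by admissible configurations, so that they are the true worst case rather than merely valid lower bounds.
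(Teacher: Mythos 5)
Your parity identity $\sum_{u \in N(v)} d(u) = d(v) + 2e_{\mathrm{in}} + out(v)$ is exactly the counting step the paper uses (stated more cleanly than in the paper), and your overall skeleton --- parity, plus a connectivity lower bound $out(v)\ge 3$, plus special treatment of degree-2 neighbours --- is the paper's skeleton. The divergence is in how the bounds are justified, and this is where your proof has a genuine gap. The paper does not use literal $3$-connectivity: its proof reads the hypothesis operationally, as ``the simplification rules of \texttt{\#IS} have been applied'', and derives $out(v)\ge 3$ from the algorithm's rules ($out(v)=1$ would trigger the multiplier reduction, $out(v)=2$ would yield a lazy 2-separator), and the special cases from the fact that an internal edge between two degree-2 neighbours of $v$ makes $v$ a cut vertex, again triggering the multiplier reduction.

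The distinction matters because taking $3$-connectivity literally collapses the lemma: a $3$-connected graph has minimum degree at least $3$, and your own 2-cut argument shows why --- for a degree-2 vertex $u$ with $N(u)=\{v,w\}$, the pair $\{v,w\}$ is a 2-cut \emph{whether or not} $w \in N(v)$. So under your hypothesis the cases $(2,2,2,2,2)$, $(2,2,2,2,2,3)$, $(2,2,2,2,2,2)$ cannot occur at all, and your analysis of them is reasoning about configurations that contradict your own assumption. Under the hypothesis the lemma is actually applied with (no reduction rule of \texttt{\#IS} fires --- strictly weaker than $3$-connectivity, and degree-2 vertices are ubiquitous there), two of your steps fail. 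First, Whitney's inequality $\kappa \le \lambda$ is no longer available, so $out(v)\ge 3$ must come from the reduction rules as in the paper. Second, your claim that a degree-2 neighbour never sends its second edge inside $N(v)$ is false in that setting: in the $(2,2,2,2,2,3)$ case a degree-2 neighbour may be adjacent to the degree-3 neighbour $u$, since $\{v,u\}$ does not satisfy the lazy-2-separator condition and $v$ is not a cut vertex; this gives $e_{\mathrm{in}}=1$ and $out(v) = 7 - 2 = 5$, still consistent with the lemma. What actually has to be proved there is $e_{\mathrm{in}} \le 1$: two internal edges at $u$ would hang $\{u\}\cup (N(u)\setminus\{v\})$ off the cut vertex $v$, and any internal edge avoiding $u$ would join two degree-2 neighbours, again making $v$ a cut vertex --- both trigger the multiplier reduction. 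This counting step (which is the paper's argument, modulo a small slip in its write-up) is the idea missing from your proof.
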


\begin{proof}
    Let $out(v)$ represent the minimum number of outgoing edges $xy$ from $N(v)$ with $x \in N(v)$ and $y \notin N(v)$. 
    We suppose we have a 3-connected graph with all simplification rules applied. 
    This means that there are multiplier reduction does not apply, and there are no lazy 2-separators. 
    If $out(v) = 0$ then we have an instance of constant size, which can be solved in constant time. 
    If $out(v) = 1$ we can apply the multiplier reduction, which is a contradiction.
    Similarly, if $out(v) = 2$ we have a lazy 2-separator which is also a contradiction.
    Hence $out(v) \geq 3$. 

    Suppose $d(v) = 5$ and $v$ has neighbors with degrees (2, 2, 2, 2, 2). 
    Any edge adjacent to two neighbors of $v$ means $G$ can be reduced by multiplier reduction by branching on $v$, so $out(v) = 5$ Similarly, if $d(v) = 6$ and $v$ has neighbors (2, 2, 2, 2, 2, 2), then $out(v) = 6$.

    Supppose $d(v) = 6$ and $v$ has neighbors (2, 2, 2, 2, 2, 3).  
    There are 7 edges adjacent to $N(v)$ but not $v$. 
    Suppose $u \in N(v)$ and $d(u) = 3$. 
    If $out(v) < 5$ then at least 3 of these 7 edges must connect two vertices in $N(v)$, but at most two of them are adjacent to $u$. 
    Thus there exists one edge $\{a,b\}$ with $d(a) = d(b) = 2$. But this means multiplier reduction can be applied, hence $out(v) = 5$.

    Suppose $d(v) = 5$ and $\sum_{u \in N(v)} d(u) = 1 \mod 2$. We showed there are at least 3 outgoing edges from $N(v)$. 
    There are also 5 edges adjacent to $N(v)$ and $v$ which gives a total of at least 8 edges that are adjacent to $N(v)$. Since having adjacent neighbors does not change the fact that $\sum_{u \in N(v)} d(u)$ is odd, $out(v)$ must be even.

    If $\sum_{u \in N(v)} d(u)$ is odd, then 
    since any edge adjacent to two neighbors of $v$ contributes a value of 2 to the sum, then $\sum_{u \in N(v)} d(u) = 1$ mod 2 implies $out(v)$ = 4. A similar parity argument is used for $d(v) = 6$, except with the parity swapped around.
    
\end{proof}

\begin{lemma}
Let $deg_2(v)$ denote the number of degree-2 vertices in $N(v)$. Then $v$ has a branching vector of
\begin{equation}
    \label{thm:wahldeg5plusformula}
    \left (w_{d(v)} + \sum_{u \in N(v)} w_{d(u)} + out(v) \cdot \Delta w_{d(v)}, \quad
                w_{d(v)} + \sum_{u \in N(v)} \Delta w_{d(u)} + deg_2(v) \cdot \Delta w_{d(v)} \right).
\end{equation}  
\end{lemma}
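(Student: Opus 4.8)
The plan is to read both coordinates of the branching vector directly off the two recursive calls that \texttt{\#IS} makes when it \emph{branches on} $v$, tracking only the decrease of the linear part $\sum_{u \in V(G)} w_{d(u)}$ of the measure $\mu(G) = \sum_{u \in V(G)} w_{d(u)} + \mu_o(L,S,R)$ used in this part of the compound analysis. Since the sub-linear term $\mu_o(L,S,R)$ does not increase under vertex and neighbourhood deletions, any lower bound on the decrease of the linear part is a valid branching-vector entry. Throughout I write $\Delta w_j = w_j - w_{j-1}$, I use that $v$ is selected as a maximum-degree vertex so that every other vertex has degree at most $d(v)$, and I rely on the property $\Delta w_j \ge \Delta w_{d(v)}$ for all $j \le d(v)$, i.e. that the weight increments are non-increasing in the degree. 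This last property is what lets me charge \emph{every} single-unit degree drop occurring anywhere in the graph at least $\Delta w_{d(v)}$.

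For the first coordinate I analyse the call on $G_{\text{in}} = G \setminus N[v]$, which corresponds to placing $v$ in the independent set. Deleting $v$ together with all its neighbours removes their weights, accounting for $w_{d(v)} + \sum_{u \in N(v)} w_{d(u)}$. The remaining decrease comes from the $out(v)$ edges leaving $N(v)$: each such edge has an endpoint $y \notin N[v]$ whose degree drops by one, and if $k$ of these edges are incident to the same $y$ then its decrease is $w_{d(y)} - w_{d(y)-k} = \sum_{j=d(y)-k+1}^{d(y)} \Delta w_j \ge k\,\Delta w_{d(v)}$, using $d(y) \le d(v)$ and the monotonicity of $\Delta w$. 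Summing over all outgoing edges gives a decrease of at least $out(v)\cdot\Delta w_{d(v)}$, which together with the deletions yields the first entry $w_{d(v)} + \sum_{u \in N(v)} w_{d(u)} + out(v)\cdot\Delta w_{d(v)}$.

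For the second coordinate I analyse the call on $G_{\text{out}} = G \setminus \{v\}$, where $v$ is excluded. Deleting $v$ contributes $w_{d(v)}$ and lowers the degree of each neighbour $u \in N(v)$ by one, for a contribution $\sum_{u \in N(v)} \Delta w_{d(u)}$. The extra term comes from the $deg_2(v)$ neighbours of degree $2$: each such $u$ now has degree $1$ and is eliminated by the simplification rules (recall $w_0 = w_1 = 0$), and this elimination lowers the degree of the other endpoint $w$ of $u$ by one, a further decrease of $\Delta w_{d(w)} \ge \Delta w_{d(v)}$ since $d(w) \le d(v)$. Counting one such propagation per degree-$2$ neighbour gives the additional $deg_2(v)\cdot\Delta w_{d(v)}$, producing the second entry $w_{d(v)} + \sum_{u \in N(v)} \Delta w_{d(u)} + deg_2(v)\cdot\Delta w_{d(v)}$.

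The main obstacle is bookkeeping rather than depth: I must ensure every charge is a genuine lower bound and is never double-counted. The delicate points are (i) several outgoing edges hitting a common external vertex, handled above by the telescoping estimate $w_{d(y)} - w_{d(y)-k} \ge k\,\Delta w_{d(v)}$; (ii) the other endpoint $w$ of a degree-$2$ neighbour possibly lying in $N(v)$ or being shared by two degree-$2$ neighbours, in which case $w$ simply loses more than one degree and its total decrease dominates the sum of the separate $\Delta w_{d(v)}$ charges; and (iii) the reliance on $\Delta w_j \ge \Delta w_{d(v)}$ for $j \le d(v)$, which must be guaranteed by the weight constraints imposed in the degree-$5{+}$ analysis. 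Because each estimate is a lower bound on the true measure decrease, and $\mu_o(L,S,R)$ never increases, the resulting pair is a valid branching vector, establishing~\eqref{thm:wahldeg5plusformula}.
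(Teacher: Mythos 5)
Your proof is correct and follows essentially the same approach as the paper's (much terser) proof: the first entry comes from deleting $N[v]$ plus the degree drops along the $out(v)$ outgoing edges, and the second from deleting $v$ alone plus the propagation through degree-2 neighbours via the reduction rules. You simply make explicit the bookkeeping (concavity of the weights, shared endpoints, elimination of degree-1 vertices) that the paper leaves implicit under ``follows from reduction rules, the measure, and the definitions of $out(v)$ and $deg_2(v)$.''
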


\begin{proof}
    The left hand side of the branching factor considers removing a vertex $v$ and it's neighbors. The right hand side considers removing a just vertex $v$. The reduction in measure on the graph $G$ follows from reduction rules, the measure $\mu = \sum_{v \in F}w_{d(v)} + \mu_o(L,S,R)$ and the definition of $out(v)$ and $deg_2(v)$.
\end{proof}

}
\longversion{\degreeFiveAnalysis}

\begin{restatable}{theorem}{lemWahlDegFive}\label{lem:wahlDegFive}
    \label{lem:wahldegall}
    \#IS can be solved in time $O^*(\rtis^n)$ and polynomial space.
\end{restatable}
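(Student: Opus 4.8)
The plan is to finish Wahlstr\"{o}m's compound-measure bootstrapping, lifting the degree-$3$ bound $O(\rtcubic^n)$ of Lemma~\ref{lem:degree3runtime} and the degree-$4$ bound $O(\rtquad^n)$ of Lemma~\ref{lem:wahlDegFour} to arbitrary maximum degree. I work with a vertex-weighted measure $\mu=\sum_{v\in V(G)}w_{d(v)}+\mu_o(L,S,R)$, where the $\mu_o$ term is the sublinear Separate, Measure and Conquer contribution that, just as in the low-degree regimes, only ever adds a polynomial factor. Since \texttt{\#IS} always branches on a vertex of maximum degree (secondarily maximizing the associated average degree), neither the maximum nor the average degree increases along any root-to-leaf path, so the analysis decomposes into regimes $\Delta(G)=i$ and I apply Lemma~\ref{lem:combinemeasureanalysis} once per regime boundary: the subroutine $A_i$ hands its instance to $A_{i-1}$ precisely when the last degree-$i$ vertex disappears, and imposing the pivot-point constraint $\mu_{i-1}\le\mu_i$ (equality at the pivot) guarantees the measure never increases at a handoff, with $\eta=|V(G)|$ bounding the recursion depth.

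The worst instance in any regime is $\Delta$-regular, giving $\mu\le w_\Delta\, n+o(n)$, so the base of the running time is $\max_\Delta 2^{w_\Delta}$. I therefore fix the target ceiling $w_\infty:=\log_2\rtis\approx 0.3053$ and cap $w_d=w_\infty$ once $d$ exceeds a threshold. It then suffices to exhibit weights $w_2,\dots,w_6\le w_\infty$ meeting every branching constraint, and to verify that the generic high-degree branch already respects the bound.

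For $\Delta(G)\in\{5,6\}$ I instantiate the $out(v)$ formula and the generalized branching-vector formula developed for degrees $5$ and $6$: the former determines, from the parity of $\sum_{u\in N(v)}d(u)$ together with the all-degree-$2$ corner cases, how many edges leave $N(v)$, and the latter converts every admissible multiset of neighbor degrees into a constraint $2^{-\delta_1}+2^{-\delta_2}\le 1$. The multisets are finite for $d(v)\in\{5,6\}$, so together with the monotonicity constraints $w_{d-1}\le w_d$ and the pivot equalities linking the degree-$4$, degree-$5$, and degree-$6$ weight vectors through Lemma~\ref{lem:combinemeasureanalysis}, I obtain a finite constraint system whose numerical optimum attains $\max_d w_d=w_\infty$; I expect the tight constraints to be the near-regular configurations $(5,5,5,5,5)$ and $(6,6,6,6,6,6)$, mirroring how $(4,4,4,4)$ was tight for Lemma~\ref{lem:wahlDegFour}.

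For $\Delta(G)\ge 7$ I close the recursion uniformly with the capped weights. After simplification has removed cut vertices and size-$2$ separators (so neither multiplier reduction nor the lazy $2$-separator rule applies), branching on a degree-$d$ vertex deletes at least $d+1\ge 8$ closed-neighborhood vertices in the in-branch (a drop of at least $(d+1)w_\infty$) and at least one vertex in the out-branch (a drop of at least $w_\infty$); since $2^{-w_\infty}=\rtis^{-1}$ and $2^{-(d+1)w_\infty}\le 2^{-8w_\infty}<1-\rtis^{-1}$ for $d\ge 7$, the constraint $2^{-w_\infty}+2^{-(d+1)w_\infty}\le 1$ holds, so no regime beyond degree $6$ beats the bound. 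Composing Lemma~\ref{lem:combinemeasureanalysis} across all regimes yields total time $O^*(2^{\mu(G)})\subseteq O^*(2^{w_\infty n})=O^*(\rtis^n)$, and polynomial space follows because \texttt{\#IS} is a branching algorithm whose only state---the separation, the cardinality function, and the constant-size subgraphs of multiplier reductions---is polynomial. The main obstacle I anticipate is the degree-$5$/degree-$6$ optimization: keeping $\max_d w_d$ down to exactly $\log_2\rtis$ while satisfying every branching constraint and the pivot equalities, and in particular verifying that the forced loss of $out(v)$ extra edges does not make some mixed-degree configuration, rather than the near-regular one, the binding case.
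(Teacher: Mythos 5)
Your proposal follows essentially the same route as the paper's proof: compound measures glued by Lemma~\ref{lem:combinemeasureanalysis}, the $out(v)$ parity lemma together with the generalized branching-vector formula to produce the finite constraint system for maximum degrees 5 and 6 (the paper's numerical optimum indeed stays below your ceiling, with $w_2=0.1146$, $w_3=0.2018$, $w_4=0.2713$, $w_5=0.2978$, $w_6=0.3051\le\log_2\rtis$), and a crude closing bound for degree at least 7 that is exactly the paper's ``quick analysis in terms of $n$'' giving $\tau(1,8)<1.2321<\rtis$. One caveat on that last step: your claimed in-branch drop of $(d+1)w_\infty$ is valid only if the $\Delta(G)\ge 7$ regime is measured \emph{uniformly}, with every vertex counting $w_\infty$ (which is what the paper's analysis in terms of $n$ amounts to), and not with the degree-dependent capped weights, since neighbors of low degree would then contribute only $w_2,\dots,w_6<w_\infty$ and the constraint $2^{-w_\infty}+2^{-(d+1)w_\infty}\le 1$ would no longer be the right inequality; under the uniform reading the handoff requirement of Lemma~\ref{lem:combinemeasureanalysis} is satisfied because $\mu_6(G)\le w_6 n+o(n)\le w_\infty n$, and your argument goes through.
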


\newcommand{\proofWahlDegFive}{
\begin{proof}
If  $d(G) \geq 7$ we can perform a quick analysis in terms of $n$, and the branching number is at worst $\tau (1, 8) < 1.2321$. So we only need to compute $\mu_6(G)$ with compound measures using Equation \ref{thm:wahldeg5plusformula}, with $d(G) \leq 6$ in order to find the worst case running time for \#IS.
\begin{figure}[ht]
    \centering
    \begin{tabular}{|c|c|c|c|c|c|c|}
    \hline
       Average Degree & $w_2$   &    $w_3$   &    $w_4$   &    $w_5$   &    $w_6$   &     Time      \\
    \hline
    % 4 - 6 & 0.115507 & 0.208788 & 0.277001 & 0.301245 & 0.307612 & $O^*(\rtis^n)$ \\
    % 4-6 & 0.1141998 & 0.1981027 & 0.2661939 & 0.2948957 & 0.3029850 & $O^*(\rtis^n)$ \\
    % 4-6 & 0.1141362 & 0.2013066 & 0.2712556 & 0.2976443 & 0.3050270 & $O(\rtis^n)$ \\
    4-6 & 0.1146078 & 0.2017931 & 0.2713406 & 0.2977566 & 0.3051140 & $O(\rtis^n)$ \\
    \hline
    \end{tabular}
    \caption{Weights and running time for $\mu_6(G)$}
\end{figure}

\end{proof}

}
\longversion{\proofWahlDegFive}

If we plug in a simple pathwidth-based subroutine \cite{FominH06} for graphs of maximum degree $3$, we obtain the following exponential-space result.

% \todo{Might need to redo this theorem with the new values}
\begin{theorem}
	\#IS can be solved in time $O^*(\esrtis^n)$.
\end{theorem}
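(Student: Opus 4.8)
The plan is to reuse the compound-measure (bootstrapping) machinery behind Theorem~\ref{lem:wahldegall} essentially verbatim, changing only the base case for subcubic graphs. In the polynomial-space algorithm, whenever $\Delta(G)\le 3$ the instance is eventually handed to the Separate, Measure and Conquer subroutine \texttt{\#3IS}, and the subcubic part of the compound measure is governed by weights $w_3',w_2'$ that are pinned together with the higher-degree measures at the pivot points $d=\nicefrac{8}{3}$ and $d=3$ (see Lemma~\ref{lem:degree3runtime}). For the exponential-space variant I would instead route every subcubic instance to the pathwidth-based dynamic programming algorithm of Fomin and H{\o}ie \cite{FominH06}: a subcubic graph with $n_3$ degree-$3$ vertices admits a path decomposition of width $\nicefrac{n_3}{6}+o(n)$ computable in polynomial time \cite{fomin2009two,FominH06}, and counting independent sets by a bag-by-bag dynamic program over such a decomposition runs in time and space $2^{\nicefrac{n_3}{6}+o(n)}=O^{*}(2^{(d-2)n/6})$, which for cubic graphs ($d=3$) is $O(1.1225^{n})$.

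First I would verify that this subroutine is a legitimate Algorithm $B$ in the sense of Lemma~\ref{lem:combinemeasureanalysis}, namely that it computes the weighted quantity $ind(G,\mathbf{c})$ rather than merely the unweighted count. This is routine: the table entry for each prefix of bags can store, for every independent subset $X$ of the current bag, the accumulated weighted product over the already-forgotten vertices, and multiplying in $\mathbf{c}_{in}(v)$ or $\mathbf{c}_{out}(v)$ as each vertex is introduced leaves the recurrence and the $2^{\mathrm{pw}}$ bounds unchanged. I would then set the subcubic weights to $w_3'=\nicefrac{1}{6}$ and $w_2'=0$, so that the subcubic measure at average degree $d$ equals $\nicefrac{(d-2)n}{6}+o(n)$, matching the subroutine's time bound; the handover condition $\mu_B(I)\le\mu(I)$ of Lemma~\ref{lem:combinemeasureanalysis} then holds, with equality at the pivot.

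Next I would re-run the bootstrapping of Lemma~\ref{lem:degree3runtime}, Lemma~\ref{lem:wahlDegFour} and Theorem~\ref{lem:wahldegall} with this smaller base value. Because $w_3'=\nicefrac{1}{6}\approx 0.1667$ is strictly below the polynomial-space value $0.1876$, the consistency constraint at the pivot $d=3$ (of the form $2w_3+w_2\le 2w_3'+w_2'$) pushes the entire family of degree-$4$, degree-$5$ and degree-$6$ weight systems downward; the branching vectors for degrees $4$ through $6$ are structurally unchanged, so only the numerical optimum of the resulting linear program shifts. Solving this program gives new weights whose worst case is attained by the degree-$6$ measure $\mu_6$, yielding a running time $O(2^{\mu_6 n})=O(\esrtis^{n})$; and since every subcubic instance is now solved by a table-based dynamic program, the overall algorithm uses exponential space.

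The main obstacle is the interface between the branching part and the pathwidth subroutine rather than any new combinatorics. One must confirm that routing subcubic instances directly to the pathwidth DP---bypassing the \texttt{\#3IS} separator machinery entirely---does not increase the measure on handover, i.e.\ that $\mu_B(I)\le\mu(I)$ holds for the instance $I$ that \texttt{\#IS} first produces with $\Delta(G)\le 3$; here one uses that the pathwidth bound depends only on $n_3$ and not on any separator structure, so the substitution is clean. Beyond this, one must re-solve the degree-$4$ through degree-$6$ constraint systems and check that every branching vector remains feasible under the lowered pivot, thereby pinning down the exact constant $\esrtis$. This last step is a finite but delicate numerical verification, entirely analogous to the one already carried out for the polynomial-space bound $O(\rtis^n)$.
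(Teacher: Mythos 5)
Your proposal is correct and follows essentially the same route as the paper: the paper's (very terse) justification is exactly to replace the polynomial-space \texttt{\#3IS} base case with the pathwidth-based $O(2^{n_3/6+o(n)})=O(1.1225^n)$ dynamic program of \cite{FominH06,fomin2009two} and re-run Wahlstr{\"o}m-style bootstrapping through the degree-4 to degree-6 compound measures, which is what you describe, including the lowered subcubic pivot weights and the re-solved constraint system yielding $O(\esrtis^n)$. Your additional observations (extending the DP to the cardinality function $\mathbf{c}$, and checking the handover condition $\mu_B(I)\le\mu(I)$ of Lemma~\ref{lem:combinemeasureanalysis}) are exactly the routine details the paper leaves implicit.
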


\subparagraph*{Acknowledgements}

We thank Magnus Wahlstr{\"o}m for clarifying an issue of the case analysis in \cite{Wahlstrom08} and an anonymous reviewer for useful comments on an earlier version of the paper.
Serge Gaspers is the recipient of an Australian Research Council (ARC) Future Fellowship (FT140100048) and acknowledges support under the ARC's Discovery Projects funding scheme (DP150101134).

% \todo{We have 12 pages excluding appendix and references}

% APPENDIX for short version
\shortversion{
    \appendix
    \newpage
    \section{Additional details and proofs}

    \restateLemma{\lemSep}{lem:sep}
    \proofLemSep

    \restateLemma{\lemUpperBound}{lem:upperBound}
    \proofLemUpperBound

    \restateLemma{\lemEightThirdRunningTime}{lem:eightThirdRunningTime}
    \proofLemEightThirdRunningTime
    \proofWeightsEightThirdRunningTime

    % \restateClaim{\claimOmegaDelta}{claim:omegaDelta}
    % \proofOmegaDelta

    % \restateClaim{\claimOmegaSpider}{claim:omegaSpider}
    % \proofOmegaSpider

    % \restateLemma{\lemOutEdges}{lem:outEdges}
    % \proofLemOutEdges

    \degreeFourAnalysis

    \restateLemma{\lemWahlDegFour}{lem:wahlDegFour}
    \proofWahlDegFour

    \degreeFiveAnalysis

    \restateLemma{\lemWahlDegFive}{lem:wahlDegFive}
    \proofWahlDegFive
} % END APPENDIX
  
%%
%% Bibliography
%%

%% Either use bibtex (recommended), but commented out in this sample

\bibliography{bibliography}

\end{document}